\documentclass[10pt]{article}
\usepackage[utf8]{inputenc}
\usepackage[margin=1in]{geometry}
\usepackage{rotating}
\usepackage{booktabs}
\usepackage{array}
\usepackage{amsmath, amsfonts}
\usepackage{color}
\usepackage[bookmarks=false]{hyperref}
\usepackage{bbm}
\usepackage{algorithm}
\usepackage[noend]{algpseudocode}
\usepackage{subcaption}
\usepackage{comment}
\usepackage{amssymb}
\usepackage{tikz}

\usepackage{mathtools}
\mathtoolsset{showonlyrefs}

\usepackage[
backend=bibtex,
style=ieee,
citestyle=numeric-comp,
sortcites,
sorting=none,
giveninits=true,
natbib,
hyperref,
maxbibnames=99,
doi=true,isbn=false,url=true,eprint=false
]{biblatex}

\usepackage{amsthm}
\usepackage[normalem]{ulem}
\usepackage{soul}

\newtheorem{theorem}{Theorem}

\newtheorem{definition}{Definition}


\def\R{\mathbb{R}}
\def\C{\mathbb{C}}
\def\tilh{\tilde{h}}

\def\tilf{\tilde{f}}
\def\F{\mathcal{F}}

\newcommand*\lri{\vcenter{\hbox{\includegraphics[width=0.75em]{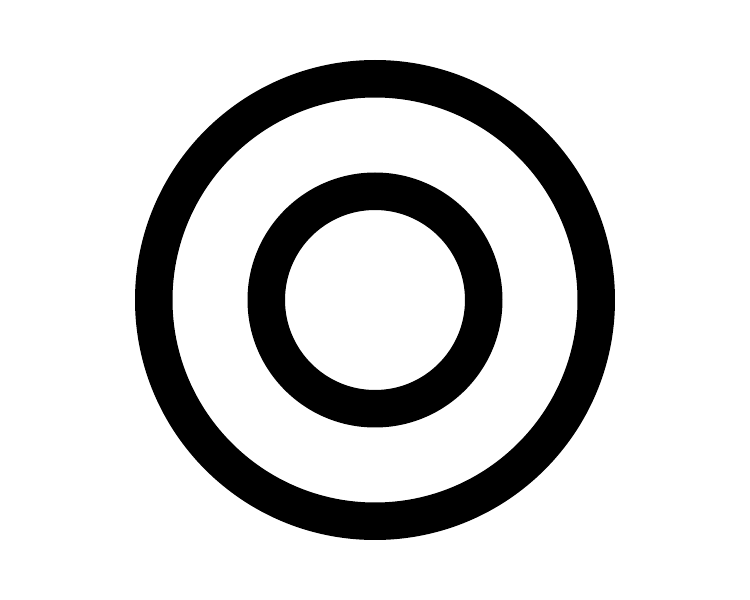}}}}

\usepackage{amsmath}

\DeclareMathOperator*{\argmin}{arg\,min}
\DeclareMathOperator*{\fft}{FFT}
\DeclareMathOperator*{\ifft}{iFFT}

\title{Ring deconvolution microscopy: exploiting symmetry for efficient spatially varying aberration correction}
\author{Amit Kohli*, \and Anastasios N.  Angelopoulos*, \and David McAllister, \and Esther Whang, \and Sixian You, \and Kyrollos Yanny,\and Federico M. Gasparoli, \and Bo-Jui Chang, \and Reto Fiolka, \and Laura Waller}

\addbibresource{bibliography.bib}

\begin{document}

\maketitle
\def\thefootnote{*}\footnotetext{These authors contributed equally to this work. Code available \href{https://github.com/apsk14/rdmpy}{here}.}\def\thefootnote{\arabic{footnote}}

\begin{abstract}
 The most ubiquitous form of computational aberration correction for microscopy is deconvolution. However, deconvolution relies on the assumption that the point spread function is the same across the entire field-of-view.
This assumption is often inadequate, but space-variant deblurring techniques generally require impractical amounts of calibration and computation.
 We present a new imaging pipeline that leverages symmetry to provide simple and fast spatially-varying aberration correction. Our \emph{ring deconvolution microscopy} (RDM) method leverages the rotational symmetry of most microscopes and cameras, and naturally extends to \emph{sheet deconvolution} in the case of lateral symmetry.
 We formally derive theory and algorithms for image recovery and additionally propose a neural network based on Seidel coefficients as a fast alternative, as well as extension of RDM to blind deconvolution. We demonstrate significant improvements in speed and image quality as compared to standard deconvolution and existing spatially-varying deconvolution across a diverse range of microscope modalities, including miniature microscopy, multicolor fluorescence microscopy, point-scanning multimode fiber micro-endoscopy, and light-sheet fluorescence microscopy.
 Our approach enables near-isotropic, subcellular resolution in each of these applications. 
   
\end{abstract}

\begin{figure}[t]
	\centering
	\includegraphics[width=\linewidth]{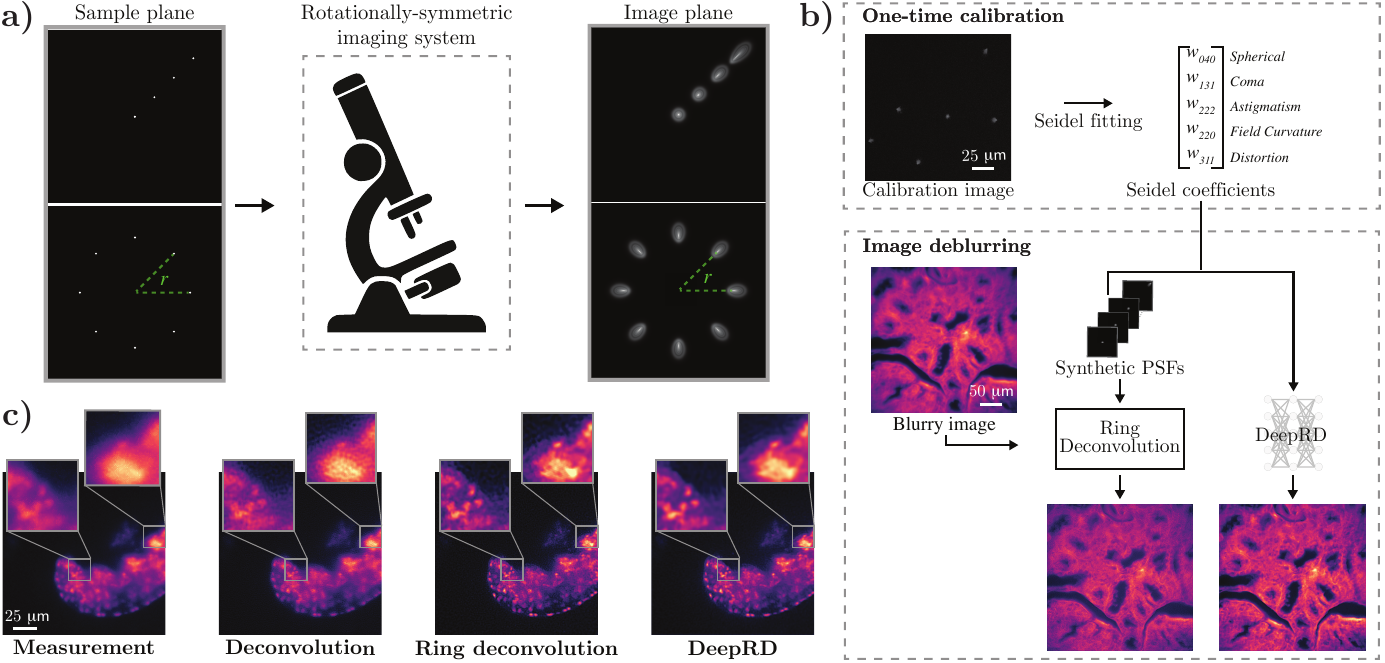}

	\caption{\textbf{Ring Deconvolution Microscopy (RDM).} \textbf{a)} Point sources at the sample plane (left) are imaged (right) to point spread functions (PSFs) with a rotationally-symmetric imaging system. The PSFs are linear revolution-invariant (LRI)---they vary with distance from the center of the field-of-view (FoV) (top row), but maintain the same shape at a fixed radius $r$, just revolved around the center (bottom row). \textbf{b)} The RDM pipeline. A one-time calibration procedure (top) captures a single image of randomly-placed point sources (e.g., fluorescent beads), then fits the primary Seidel coefficients (see Sec.~\ref{subsec:seidel}). Next, we either use the Seidel coefficients to generate a radial line of synthetic PSFs, if using ring deconvolution, or we feed the coefficients directly into Deep Ring Deconvolution (DeepRD). After calibration, we can deblur images (bottom) using either ring deconvolution or DeepRD. \textbf{c}) Experimental deblurring of live tardigrade samples imaged with the UCLA Miniscope~\cite{aharoni}. From left to right: measurement, standard deconvolution, ring deconvolution, and DeepRD. Ring deconvolution and DeepRD consistently outperform deconvolution.}
	\label{fig:1}
\end{figure}

\section{Introduction}
   
Much of optical engineering is focused on reducing aberrations by adding additional corrective optical elements to an imaging system---consider a microscope objective, comprised of numerous lenses stacked in a housing. Such designs allow for high-performance imaging, but incur added cost, weight, and complication. 
Even with large and expensive lens stacks, it is difficult and, in some cases, impossible to correct all aberrations across a large area, and so aberrations are often what limit the usable field-of-view (FoV) of a system. Furthermore, some systems cannot accommodate any aberration-correction optics; for example, additional elements may not fit in miniaturized microscopes (e.g. Miniscope~\cite{aharoni}) and are prohibitively expensive for large-aperture telescopes~\cite{stahl2009preliminary}.

Faced with a poorly-corrected imaging system, the modern microscopist instead turns to computational aberration correction, where the burden is shifted onto computer algorithms applied after the image is acquired. The most commonly used correction technique, image deconvolution, captures a calibration image of a small point-like source, known as the point spread function (PSF), in order to characterize the aberrations. The PSF can then be used to computationally deconvolve any image taken with the system via simple and fast algorithms, to yield a deblurred result. A main limitation of this approach is that it assumes that the system's PSF does not vary spatially (i.e. the system is linear space-invariant, LSI). This assumption is usually only true near the center of the FoV, and optical designers often artificially sacrifice part of the system's FoV in order to maintain space invariance. 

To go beyond space-invariant limitations, a large community effort has gone towards heuristic forms of spatially-varying 'deconvolution', wherein one measures PSFs at multiple points within the FoV and uses them to correct the image. Such heuristics include: assuming each region of an image is locally LSI~\cite{section-Trussell}, adaptively splitting the FoV by first quantifying the degree of space-variance~\cite{quant-Lin, quant-lohmann}, interpolating PSFs~\cite{interp-denis, interp-Nagy}, decomposing the PSF into space-invariant orthogonal modes~\cite{modes-denis, modes-flicker, modes-Hong, modes-miraut, modes-popkin, modes-Sroubek, modes-todlauer, Turcotte:20}, and doing the same in Fourier space~\cite{modes-fourier-Lee}. These heuristics can approach rigorous recovery as the number of PSFs collected grows, possibly into the hundreds of thousands; however, the tradeoff in terms of the complexity of calibration and computation quickly becomes intractable. For example, in patch-wise deconvolution, the FoV is divided into patches, each of which gets deconvolved by a PSF measured at its center. Maximum accuracy is achieved when the patch size is reduced to a single pixel, but then a megapixel image would require a million PSF measurements 
and a computation time of hundreds of hours to deblur (see Fig. \ref{fig:2}c).

Another emerging modality is deep deblurring~\cite{Yanny-multiwiener, ren2018deep, Deng:22, dong2020deep, weigert2018content, UNet_Deblur, li2022incorporating}, in which varying amounts of system information are incorporated into a deep neural network.
On one end of the spectrum, networks that are primarily data-driven struggle with extrapolation beyond the training data, and tend to reproduce whatever biases existed therein, a particularly relevant point since many of them are trained on simulated data. On the other end, networks that incorporate physical information, such as calibrated PSFs, may have better generalization properties but suffer the same accuracy/efficiency tradeoff as patch-based methods. 
For these reasons, spatially-varying deblurring has not become commonplace among practitioners.
Thus, there is a need for spatially-varying deblurring methods that are effective, efficient, and robust.

Here, we propose a spatially-varying method that requires only a single calibration image and has reasonable compute time (tens of seconds on a megapixel image), while offering rigorous deblurring for imaging systems that are symmetric in some way. We focus on \emph{rotationally-symmetric} systems---i.e., systems that are symmetric about their optical axis---but also show an example with the lateral symmetry present in light-sheet microscopy.
Rotational symmetry occurs in many imaging systems by design, and a significant portion of optical theory is developed under this assumption. While some existing deblurring techniques have leveraged rotational symmetry, they are approximate and restricted to a specific subset of radially-varying blurs: those due to camera zoom~\cite{Webster_motion_blur,boracchi2008deblurring}, the specific case of a parabolic mirror~\cite{luan2018suppress}, and an approximate scheme only for blurs from a single lens by applying deconvolution to four concentric regions~\cite{zhang_radial_blur, zhang_restoration, zhang_analysis, Zhang_ringing} . Other work does the same for DSLR cameras and also requires three image channels from an RGB camera~\cite{inproceedings} . In contrast, what we propose applies to \textit{any} rotationally-symmetric imaging system, can incorporate more complex PSFs---even if they cannot be theoretically derived, makes no approximations (e.g., isoplanatic regions) in the image formation model, and can easily extend to other symmetries.

Our Ring Deconvolution Microscopy (RDM) technique achieves both accuracy and efficiency.
RDM models image formation for rotationally-symmetric imaging systems rigorously, allowing for accurate deblurring while still remaining practical, both computationally and in terms of calibration. The first step in RDM is a simple, single-shot calibration scheme, in which the system's primary Seidel aberration coefficients are estimated from a single image of randomly distributed point sources. These coefficients quantify the severity of spatial variance and provide the necessary system information for the second step, deblurring.
Within this second step we propose two alternative image deblurring algorithms. 
The first (our main algorithm) is ring deconvolution, which uses a new and rigorous theory for rotationally-symmetric imaging to deblur the image at all points in the FoV, with only order $N^3\log(N)$ ($N$ is the image side length) compute time, as compared to $N^4$ for full spatially-varying deblurring. For even faster computation (but without theoretical guarantees), we further propose an alternate neural network-based algorithm called Deep Ring Deconvolution (DeepRD), which constrains learning with physical knowledge provided by the system's Seidel aberration coefficients. If the system is not spatially-varying or minimally so, RDM still offers an improvement over standard deconvolution by instead deconvolving with a synthetic PSF generated by the Seidel coefficients (see Sec.~\ref{sec:results:extensions}).

Although ring deconvolution is specific to systems exhibiting rotational symmetry, our theory can be easily adapted to exploit other forms of symmetry. As an example, we derive an analogous form of deconvolution for situations where the blur varies laterally (along one Cartesian axis), which we term \emph{sheet deconvolution}. This is the case in light-sheet microscopy, where the light-sheet illumination causes space-varying blur in the direction perpendicular to the imaging plane. We show experimental results for sheet deconvolution on light-sheet fluorescence microscopy data in Sec.~\ref{methods:sheet}, in which we effectively increase the FoV by about twofold through our deblurring process. 

These algorithms outperform existing methods, approaching subcellular, isotropic resolution across the FoV. We demonstrate this on four diverse microscope modalities: miniature microscopy, multicolor fluorescence microscopy, multimode fiber micro-endoscopy, and light-sheet fluorescence microscopy. Each of these modalities contains different characteristics and imaging mechanisms that are representative of a wide range of imaging systems, thereby forming a comprehensive basis to demonstrate the wide applicability and practical relevance of our methods.
See Fig.~\ref{fig:1} for a summary of the RDM pipeline along with an example result on images of live tardigrades. An open source implementation of RDM and its extensions can be found \href{https://github.com/apsk14/rdmpy}{here}.


\section{Results}
\label{sec:results}

Before we display our experimental results, we briefly outline the RDM pipeline from calibration to deblurring. Further details about RDM can be found in Sec.~\ref{sec:methods}.

\subsection{Ring Deconvolution Microscopy Pipeline}



The first step in our RDM pipeline is calibration; we measure the system's response to a point source, i.e., its PSF. The PSFs of a space-varying system will vary across the FoV, and many PSF measurements may be required to fully characterize the system. However, space-varying systems that are rotationally symmetric require fewer measurements for system characterization, which we exploit here. Intuitively, PSFs that are the same distance from the center of the FoV all have the same shape, just rotated at different angles  (see Figure~\ref{fig:1}a) because of the symmetry. We call this property of the PSFs \emph{linear revolution-invariance} (LRI), and denote it mathematically as
\begin{equation}
        \tilh(\rho, \phi; r, \theta) =  \tilh(\rho, \phi - \theta; r, 0),
\label{eq:lri}
\end{equation}
\noindent where $\tilh(\rho, \phi; r, \theta)$ is the (spatially-varying) PSF in polar coordinates from a point source at location $(r, \theta)$. Note that the shape of the PSF itself is not necessarily rotationally symmetric. LRI greatly improves complexity of the calibration procedure, in that we need only measure the PSF at any one point along the circle for each radius $r$ from the optical center. 

In practice, directly measuring the PSF at every radius $r$ is impractical. Instead, under LRI, there is a simple and effective method for simultaneously estimating these PSFs from a single image, without any motion stage required. The method is described in Section~\ref{subsec:seidel} and is the first step of the RDM pipeline. It works by estimating the Seidel aberration coefficients from an image of point sources randomly scattered across the FoV (see Fig.~\ref{fig:1}b). Seidel coefficients are a natural choice for LRI systems, since Seidel polynomials are explicit functions of the field position and thus use the same coefficients regardless of the position in the FoV~\cite{born2013principles}. Thus, they are easier to estimate than Zernike coefficients, which are different at each field position~\cite{Zheng:13, Shao:19}. Our experiments demonstrate that the 5 primary (4th order) Seidel coefficients (sphere, coma, astigmatism, field curvature, and distortion) suffice to characterize spatially-varying LRI systems. A discussion of higher-order aberrations can be found in Appendix~\ref{appendix:higher-order}.


After calibration, the next step is to use the estimated Seidel coefficients to deblur the measured image. 
LRI systems, much like LSI systems, allow for computationally-efficient models. For LSI systems, we leverage two axes of space-invariance to model the blur as a 2D convolution, and for LRI systems we can leverage the revolution-invariant angular axis, leading to a blur computation that is a sum of 1D convolutions.
Section~\ref{subsec:ring} provides a description of the forward and inverse algorithms, which we call \emph{ring convolution} and \emph{ring deconvolution}, respectively. 
Therein is also included a theoretical proof of ring convolution's exactness under rotational symmetry.
We briefly describe ring deconvolution here, along with an alternate method using deep learning, termed \emph{DeepRD}.
These methods form the second step in the RDM pipeline (Fig.~\ref{fig:1}b).
\begin{enumerate}

\item \textbf{Ring deconvolution.}  We derive an optimal algorithm for reconstructing the underlying sample from a blurry image given the PSF at each radius of the FoV of a rotationally-symmetric imaging system. This is our main algorithm.

\item \textbf{Deep Ring Deconvolution (DeepRD).} Although ring deconvolution is significantly faster to compute than a full patch-based spatially-varying deblur technique, it may still be relatively slow (on the order of a few minutes) for very large image sizes (e.g., beyond $1024 \times 1024)$ or video data. Deep learning enables a faster (but approximate) version of ring deconvolution called DeepRD. 
As input, it takes a blurry image and a list of the 5 primary Seidel coefficients.
DeepRD is trained on a dataset of natural images that are synthetically blurred using ring convolution.
\end{enumerate}

\begin{figure}[p]
 \centering
 \includegraphics[width=\linewidth]{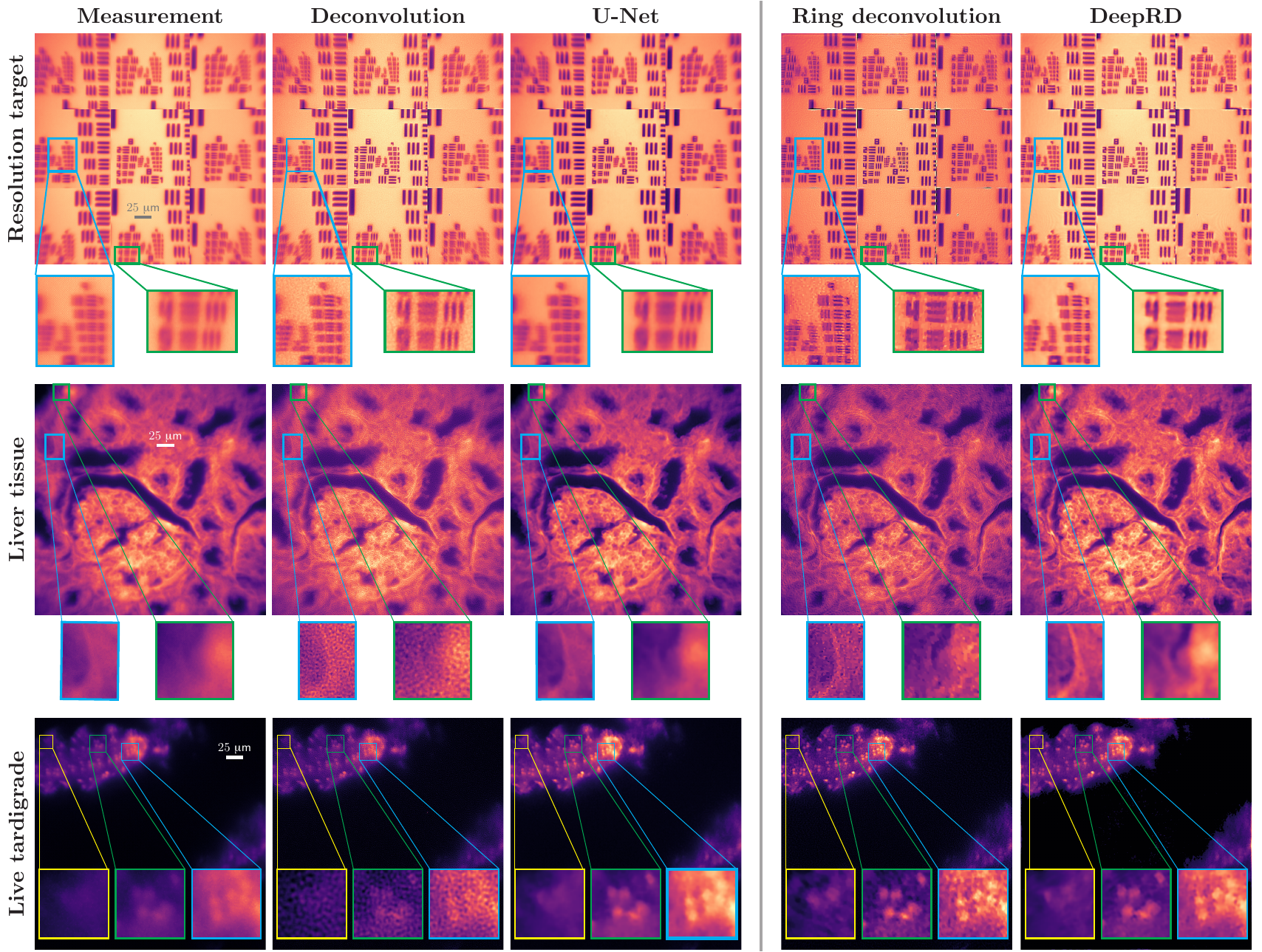}
\caption{\textbf{Ring Deconvolution Microscopy (RDM) for the UCLA Miniscope.} After calibrating the Miniscope with a single image of fluorescent beads (see Fig.~\ref{fig:1}b), we show results from several deblurring methods for comparison: standard deconvolution (using  a single measured PSF), a U-Net trained on our spatially-varying blur dataset, ring deconvolution and DeepRD. Deconvolution assumes space-invariance while the remaining methods are designed to handle spatially-varying aberrations. In the first row, ring deconvolution and DeepRD clearly resolve resolution target elements near the edges of the FoV, which are not as well resolved by the two other methods. Zoom-ins show RDM resolves up to element 6 of group 9 (blue inset) and element 5 of group 8 (green inset). Similar results along with corresponding insets are shown for the other samples.}
 \label{fig:3}
\end{figure}


\subsection{Experimental Results}
\subsubsection*{Miniature microscopy}

Our first experimental system is the UCLA Miniscope~\cite{aharoni}, a miniature microscope used largely for neuroimaging in freely-behaving animals. The first step of our experimental pipeline is to calibrate the imaging system by capturing a single image of randomly-placed fluorescent beads (obtained by smearing bead solution on a slide). We use it to fit Seidel coefficients, obtaining 0.85, 0.56, 0.25, 0.29, and 0 waves of spherical aberration, coma, astigmatism, field curvature and distortion, respectively. These numbers, while specific to our particular assembly of the Miniscope, are consistent with the aberration profile of a radial GRIN lens~\cite{Wang:90}, which is the objective lens used by the Miniscope. The fact that the off-axis coefficients (i.e. all of the primary coefficients except for spherical) are nonzero confirms that the system is indeed spatially-varying. For comparison with standard deconvolution deblurring, we also acquire the center PSF by imaging a single fluorescent bead isolated and centered in the FoV. The PSF is then denoised before its use in deconvolution (see~\ref{subsec:exp_deets}).

Having completed the one-time calibration for the Miniscope, we image a variety of different samples: a USAF resolution target, rabbit liver tissue, and live fluorescence-stained tardigrades. For each sample, we compare reconstructions from ring deconvolution and DeepRD, as well as standard deconvolution (using the PSF measured at the center of the FoV) and a baseline U-Net (see Fig.~\ref{fig:3}). 

The USAF resolution target was placed at 9 separate locations in the FoV, which allows us to observe the smallest features at each location. We stitch these 9 images together, using only the region of each constituent image that contains the high-resolution group, to form the measurement image shown in Fig.~\ref{fig:3} (first row). Ring deconvolution and DeepRD---having knowledge of the field-varying aberrations via the Seidel coefficients---give the most improvement near the edges and corners of the image. Standard deconvolution produces a noisy, low contrast result in those regions due to the mismatch of the center PSF with the edge PSFs. We also note that both learning methods (U-Net and DeepRD) offer the best denoising performance, a well-known property of neural networks \cite{vincent2008extracting}; however, this comes at the cost of inconsistent performance---both models perform worse on the resolution target than on the other samples.

For our second sample, rabbit liver tissue, our methods perform well, revealing features in the corners of the image, including the outlines of membranes, which are not clear in the raw measurement or other methods. Such clarity is critical for downstream tasks like image segmentation.

Finally, we fluorescently stain live semi-starved tardigrades with a DNA gel stain and capture a series of videos. We apply deblurring to each frame and display one such frame in the bottom row of Fig.~\ref{fig:3}; the full videos can be found \href{https://berkeley.box.com/s/d1o1901uv8ehxdf7kzdapyej1c7dt6bi}{here}. As with the previous experiments, ring deconvolution and DeepRD provide increased image contrast and detail in the corners of the frames compared to the other methods. In particular, the small, dot-like features within the tardigrade are better resolved.

 Further experimental details can be found in Sec.~\ref{subsec:exp_deets}. The baseline U-Net we use is inspired by the popular Content-Aware Image Restoration (CARE) model~\cite{weigert2018content}; we use a similar model structure but train both DeepRD and the baseline U-Net using a novel synthetic training strategy: applying ring convolutions using PSFs generated from random Seidel coefficients to natural images from Div2k~\cite{agustsson2017ntire}(see Sec.~\ref{subsec:deeprd}). The dataset is publicly available \href{https://berkeley.app.box.com/folder/252726027686?s=vv3g6avhrr9agijmlj3b1153oo7x9gao}{here}. Despite being trained on simulated data, both the U-Net and DeepRD exhibit reasonable deconvolution performance on experimental images, as shown in Fig.~\ref{fig:2}. Further, in Appendix~\ref{appendix:DeepRD}, we provide a demonstration that DeepRD is interpretable: it understands the individual effects of each aberration coefficient.
 
\subsubsection*{High NA multicolor fluorescence microscopy}

Our second experimental system to demonstrate RDM is a high magnification, high numerical aperture (NA) microscope. Such devices are critical to observing biological samples at subcellular resolution. However, as the NA increases, so do field-varying aberrations. RDM offers a pathway to utilize the level of magnification and NA needed for subcellular imaging whilst maintaining isotropic resolution over the entire FoV. Moreover, RDM does this efficiently over multiple fluorescence color channels, allowing for multicolor, subcellular-resolution imaging over the entire FoV. To demonstrate this, we image the fluorescently-labeled actin (green channel) and mitochondria (red channel) of Bovine Pulmonary Artery Endothelial (BPAE) cells with a 100$\times$ 1.4 NA objective. More details of the sample and imaging tools are found in~\ref{subsec:exp_deets}.

Again, we calibrate the system by capturing a single image of randomly scattered beads (see Fig.~\ref{fig:multicolor} a)). We choose to perform a separate calibration for each color channel, with different bead images corresponding to the different emission wavelengths; this strategy allows RDM to additionally correct chromatic aberrations. In practice, however, there is very little difference between the resulting estimated red and green Seidel coefficients.

After calibration, we image BPAE cells and process them with both RDM and standard deconvolution, for comparison. Figure~\ref{fig:multicolor} b) shows that RDM deblurs the raw images consistently over the entire FoV, including the corners of the image, while standard deconvolution becomes low contrast and noisy near the edges. In both examples, RDM is able to resolve subcellular features in the actin and mitochondria near the edges that are not visible in standard deconvolution. Such capability allows for larger FoVs to be used, lessening the burden of mechanically scanning and stitching together many smaller FoV images when the sample is large.

\begin{figure}[p]
	\centering
        \vspace{-1cm}
	\includegraphics[width=0.85\linewidth]{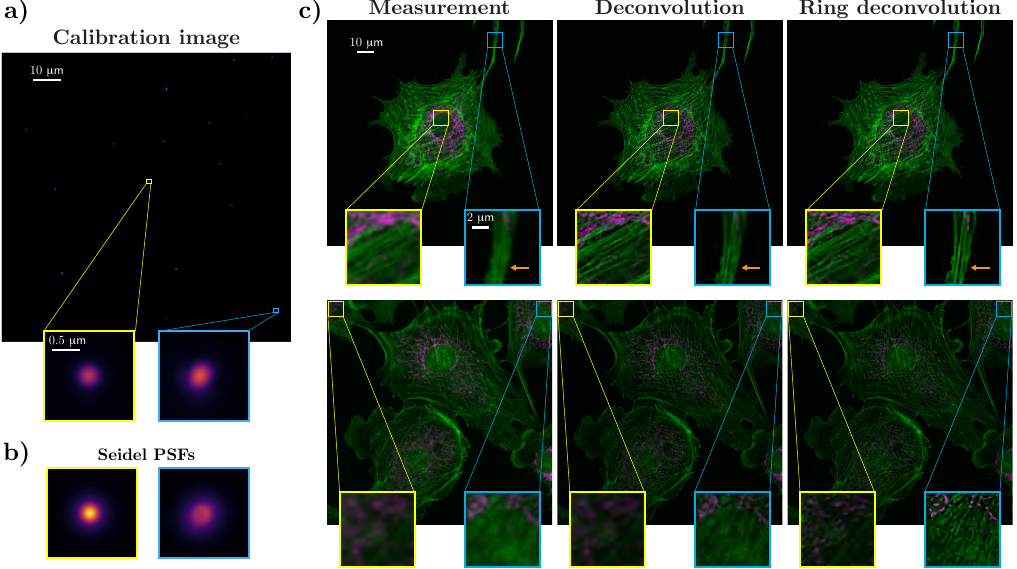}
	\caption{\textbf{RDM on multicolor fluorescence imaging.}
 \textbf{a)} Fluorescent beads imaged with a 100x 1.4 NA objective and \textbf{b)} corresponding Seidel-fitted PSFs demonstrate spatially-varying nature of the system. \textbf{c)} Two examples of Bovine Pulmonary Artery Endothelial (BPAE) cells processed by standard deconvolution and RDM. Deconvolution and RDM perform similarly in the center but RDM is better in the corner, revealing submicron features in the actin (orange arrow). RDM similarly resolves actin filaments and mitochondria where deconvolution does not.}
    \label{fig:multicolor}
\end{figure}

\begin{figure}[p]
	\centering
        \vspace{-1cm}
	\includegraphics[width=0.8\linewidth]{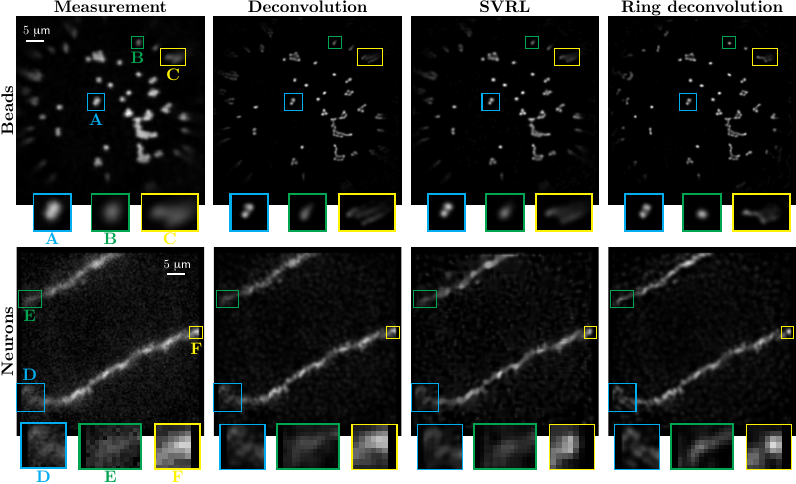}
	\caption{\textbf{RDM for point-scanning micro-endoscopy through a multimode fiber.} Comparison of deconvolution, spatially-varying Richardson-Lucy (SVRL), and RDM on images taken from a point-scannning multimode fiber micro-endoscope~\cite{Turcotte:20}. SVRL results are directly obtained from~\cite{Turcotte:20}. Top row: Sample with 1 $\mu m$ beads; all methods resolve the circularly-shaped beads in the center (blue inset). Away from the center, however, deconvolution and SVRL fail to completely remove the ellipiticity of the bead but RDM does (green inset). Moreover, in the corner only RDM can resolve bead clusters into their component beads (yellow inset). Bottom row: live Wistar rat neuron; RDM can clearly resolve structures (blue inset), sharpen the neuron spine (green inset), and reveal a point-like structure (yellow inset) near the edges where the other methods cannot.}
    \label{fig:mmf}
\end{figure}

\subsubsection*{Micro-endoscopy through a multimode fiber}

Point-scanning micro-endoscopy through a multimode fiber~\cite{Turcotte:20,vasquez2018subcellular} is a powerful technique for deep \emph{in vivo} imaging at subcellular resolution, with applications in the brain and other sensitive organs, where minimal tissue damage is required. However, due to the extreme constraints imposed in the design of the fiber, its resolution capabilities degrade rapidly and severely away from the center of the image, resulting in a small usable FoV (see the top row of Fig.~\ref{fig:mmf}). The spatially-varying images from such a system have been heuristically deblurred in~\cite{Turcotte:20}, but RDM, with its rigorous formulation, offers improvement in performance with far less calibration. 

To verify this, we process images of beads and live rat neurons from a previous paper~\cite{Turcotte:20} using RDM, their spatially-varying Richardson-Lucy (SVRL) algorithm, and standard deconvolution, for comparison. SVRL is similar to the modal decomposition work in~\cite{yanny2020miniscope3d, flicker2005anisoplanatic}. It uses fewer synthetic PSFs than RDM (~30 vs. ~120), so requires slightly less memory, but requires significantly more calibration images. Here, SVRL is calibrated by measuring a uniform grid of 21$\times$21 PSFs---441 images---whereas RDM is calibrated with a single image of a few randomly-placed beads. Despite the considerably lighter calibration and similar computational complexity, RDM provides an improvement in image quality over SVRL (see Fig.~\ref{fig:mmf}). In the corners of the bead image, we see that RDM is able to remove the aberration-induced ellipticity of the underlying circular beads and resolve clumps of beads, unlike the other methods. The same holds for the neuron images---in the corners, RDM is able to resolve distinctive subcellular features and tighten the spread of thin spines far better than the other methods. Additional experiments on this system and comparisons with more methods can be found in Appendix~\ref{appendix:additional-comps}.

\begin{figure}[p]
	\centering
        \vspace{-1cm}
	\includegraphics[width=0.9\linewidth]{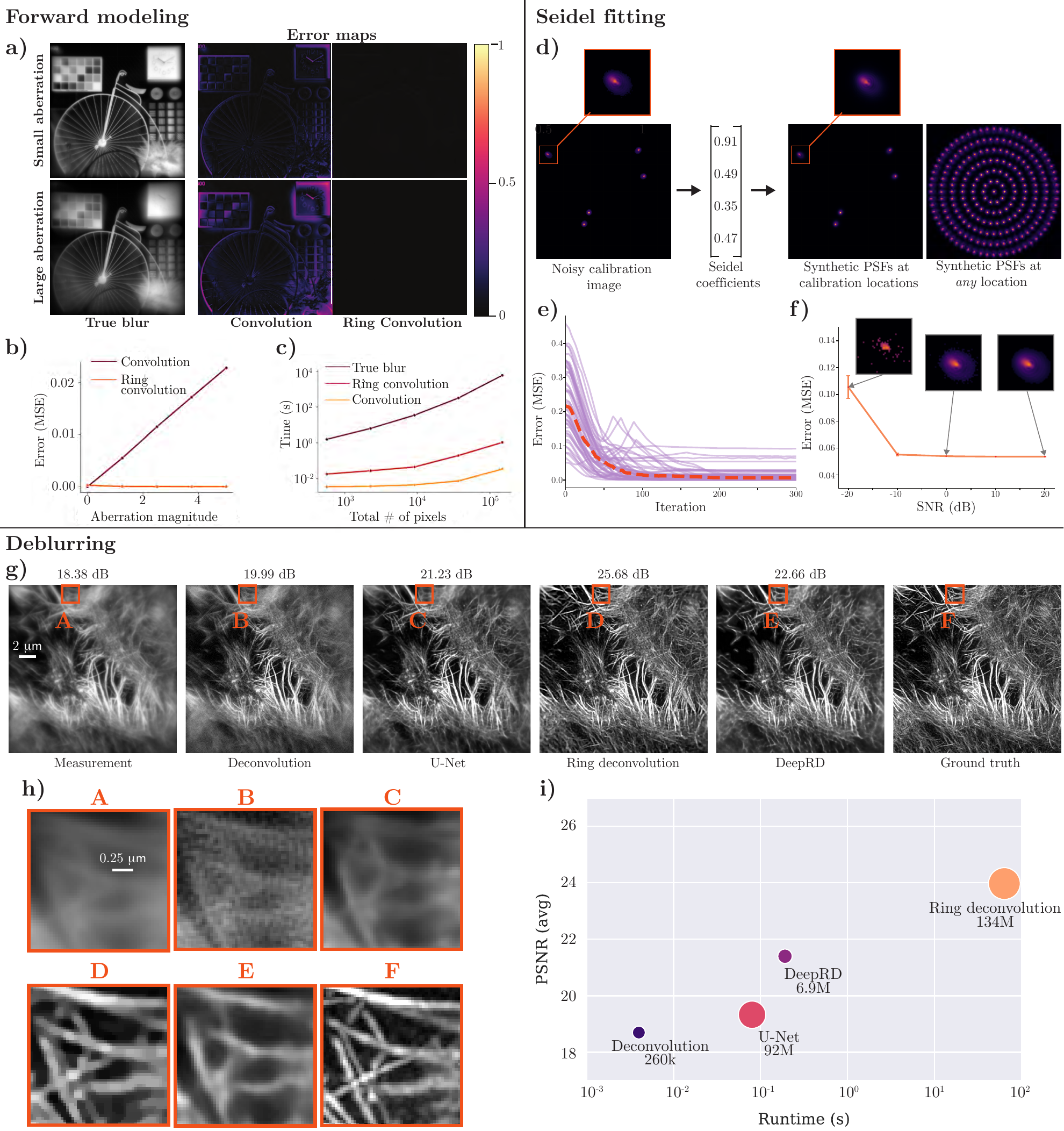}
	\caption{\textbf{Simulations to quantify RDM performance.}
 \textbf{a)} Forward model: ring convolution is compared with standard deconvolution by computing error maps that measure the absolute difference between the simulated blurs from each method and the \lq true blur\rq (produced by manually superimposing every PSF at every pixel). When off-axis aberrations are small (top row), both forward models are accurate. When aberrations are large (bottom row), convolution becomes significantly worse, yet ring convolution remains accurate. \textbf{b)} Forward model mean-squared error (MSE) as a function of off-axis aberration magnitude (for the image above). \textbf{c)} Runtime of each method as a function of size of the image (in pixels), averaged over n=50 trials. Computing the true blur quickly becomes intractable, but convolution and ring convolution remain relatively fast. \textbf{d)} A noisy image of randomly-placed PSFs is used to estimate the underlying Seidel coefficients, which can then be used to generate PSFs at any location. \textbf{e)} Seidel fitting error is plotted as a function of iteration in the optimization algorithm to demonstrate convergence. Each purple line is a different trial of n=50 trials with a randomly-sampled set of underlying Seidel coefficients. The red dashed line is the per-iteration median.  \textbf{f)} Average MSE of the fitted Seidel coefficients plotted as a function of SNR of the calibration image (with standard deviation shown by error bars) over n=50 random trials. Some example calibration PSFs are shown. Even under severe noise, the Seidel fit is still accurate. \textbf{g)} Deblurring results on noisy images from the CARE dataset, with PSNR values above each method. \textbf{h)} Zoom-ins of an off-axis patch in each deblurred image; ring deconvolution and DeepRD have the highest quality. \textbf{i)} Accuracy (PSNR) vs runtime of each method (averaged over n=28 true blurred images using unseen coefficients), with the number of model parameters (written below each circle) determining the size of the circle.  
    }
    \label{fig:2}
\end{figure}

In summary, we find that ring deconvolution consistently produces the best reconstructions among the methods we tested on our diverse panel of imaging systems. We believe the improvement arises due to the lack of approximations and heuristics in the method's derivation. Moreover, as compared to deep-learning-based methods, ring deconvolution undergoes no learning procedure, and thus does not transmit bias from the training data into future reconstructions. DeepRD performs similarly to ring deconvolution, but it is faster and less consistent. Both perform better than the U-Net and standard deconvolution.

\subsection{Simulation Results}

To quantify the performance of our methods, we conducted a series of simulations in which we have access to the unblurred ground-truth image. 
These simulations are done with Gaussian noise and are repeated with Poisson noise in Appendix~\ref{appendix:poisson}.

Our first step is to quantify the forward model: ring convolution. Simulating the blurring operation of an imaging system is a critical part of image deblurring. By mathematically specifying the forward model of the system, we can then invert it to obtain the desired deblurring algorithm. When a system is space invariant, the forward model is a simple convolution operation. For space-varying systems, however, we must account for the changes in the PSF across the FoV. The brute-force approach for doing so would superimpose weighted PSFs at each pixel in the image to compute the \lq true blur\rq, at the cost of long compute times. However, when the system is rotationally symmetric (i.e., varies only radially), ring convolution is an equivalent operation to the brute-force method, but runs much quicker---less than a second---even for image sizes upward of $512 \times 512$. To verify, we blur each test image using spatially-varying PSFs rendered from a randomly chosen set of Seidel coefficients. We treat the true blur as ground truth and compare the error maps for both standard LSI convolution and our ring convolution. As expected, standard convolution results in errors near the edges of the FoV, whereas ring convolution produces accurate blur across the entire image (Fig.~\ref{fig:2}a). We can quantify the aberration magnitude by calculating the norm of the off-axis Seidel coefficients; in Fig.~\ref{fig:2}b we see that the error for standard convolution increases approximately linearly with this aberration magnitude. In contrast, ring convolution continues to produce an accurate blur, independent of the strength of the aberrations. 
In Fig.~\ref{fig:2}c, we compare compute times for these forward models, showing that our ring convolution method is nearly four orders of magnitude faster than the other exact method (true blur) for a megapixel-sized image.  


Next, we verify our Seidel fitting method by quantifying the error in our estimated Seidel coefficients that were fitted from a single noisy image of randomly-scattered point sources. As detailed in Sec.~\ref{subsec:seidel}, the Seidel fitting procedure involves searching for the set of 5 primary Seidel coefficients that best fit the measured PSFs at their given positions in the calibration image. We simulate this process many times by randomly generating sets of Seidel coefficients, using them to produce a calibration image with additive Gaussian noise, and then estimating those coefficients only using the noisy calibration image (see Fig.~\ref{fig:2}d). Section~\ref{subsec:seidel} shows that this optimization problem is nonconvex, yet we find that our estimated Seidel coefficients almost always converge correctly. This is not entirely surprising, since the Seidel procedure is optimal in the maximum likelihood sense \cite{Southwell:77}. To quantify, we ran many Seidel fits over random coefficients, randomly located PSFs, and random draws of additive noise, and plotted the error between estimated and true coefficients in Fig.~\ref{fig:2}e. Due to nonconvexity, we see that not every case produces errors which converge to 0. However, two things provide assurance: 1) the median convergence approaches 0, meaning that a majority of optimizations will produce the optimal solution, and 2) even the runs which do not converge to the global minimum produce PSFs which are close enough to the true ones to provide good quality ring deconvolution. Finally, since the Seidel fitting procedure acts as a PSF denoiser, we tested the fit with varying amounts of noise (up to $-20$ dB SNR) and found that the fit only reduces performance slightly, even with severe additive noise (Fig.~\ref{fig:2}f). Additional simulations for higher-order Seidel coefficients can be found in Appendix~\ref{appendix:higher-order}.


After verifying that our forward model and Seidel fitting methods perform well, we use them as part of an inverse problem to deblur images and quantify the performance. We compare our methods (Ring deconvolution and DeepRD) with standard deconvolution and the baseline CARE-based U-Net. The learning methods are trained with images from the CARE and Div2k datasets~\cite{weigert2018content, agustsson2017ntire}, after synthetically blurring them with space-varying PSFs rendered from a random set of Seidel coefficients. It was computationally infeasible to generate the training set using the true blur technique, so instead we used our ring convolution to generate the blurred images for training. This should not pose a major issue since ring convolution is theoretically equivalent to the true blur case. To remove any doubt, we construct the test set using the true blur.
Each model is first pretrained on 80,500 data points from the Div2k dataset (800 base images blurred with 100 different Seidel coefficients) and then fine-tuned on a small 8400 data point subset of the CARE dataset (24 images, 350 different Seidel coefficients). After training, each method is tested on 
a test set of 28 unseen images from the CARE dataset, which are blurred using the true blur method and noised with additive Gaussian noise. The results of each method on one representative test image are shown in Fig.~\ref{fig:2}g,h, with the PSNR (in dB) listed above. Both DeepRD and ring deconvolution deblur better near the edges and corners of the image, where the PSF most deviates from the center PSF. Despite using ring convolution for the training set and the true blur for the test set, neither of the networks (U-Net and DeepRD) show signs of model mismatch. 

In Fig.~\ref{fig:2}i, we plot the average accuracy (PSNR) versus runtime across the full test dataset, with the size of the circle representing the number of parameters needed (memory footprint) for each method. Ring deconvolution provides the best reconstruction, but it is also the largest and slowest of the methods tested, taking about 60 seconds for a full image. However, if we were to try to deblur the image rigorously without using RDM, it would take hundreds of hours, despite being theoretically equivalent to ring deconvolution. Thus, ring deconvolution allows for relatively fast, accurate deblurring where it was once infeasible. DeepRD performs nearly as well as ring deconvolution and has the fewest parameters needed of all the space-varying techniques, allowing it to be fast and memory efficient. DeepRD is almost three orders of magnitude faster than RDM. The baseline U-Net and standard deconvolution PSNR values are significantly worse. 

\subsection{Sheet Deconvolution: Extension to Lateral Symmetry}

\label{subsec:ls}

\begin{figure}[t]
 \centering
\includegraphics[width=\linewidth]{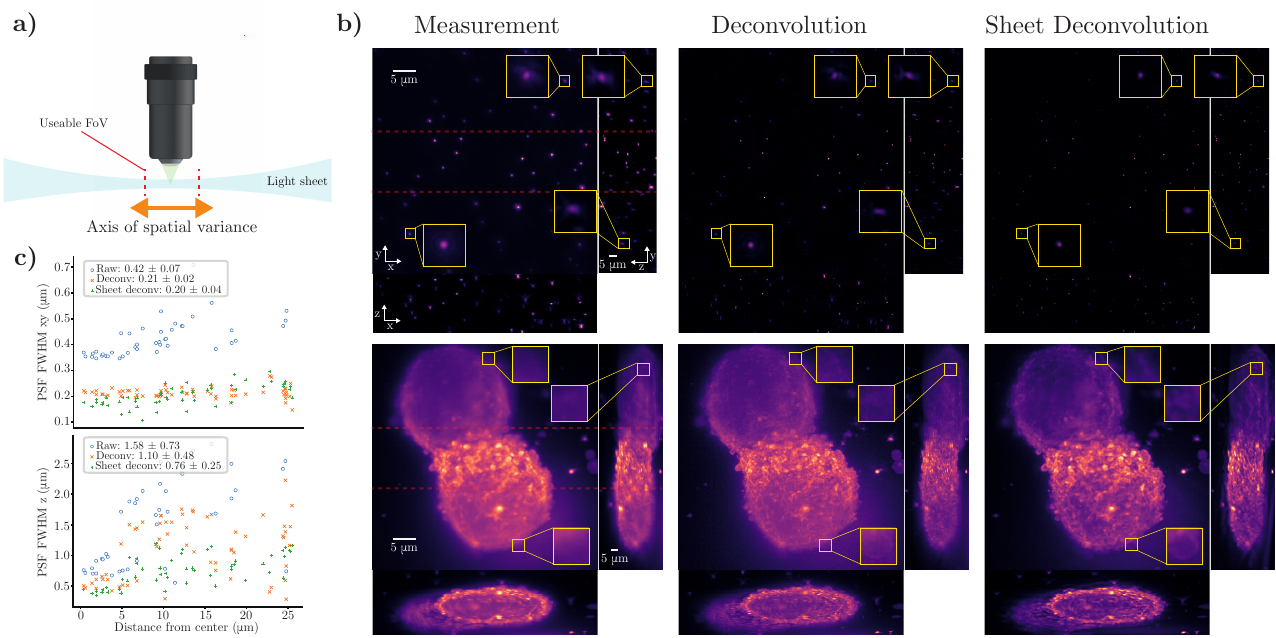}
 \caption{\textbf{Sheet deconvolution exploits lateral symmetry for light-sheet microscopy.} We run sheet deconvolution on samples imaged by a light-sheet fluorescence microscope with a detection NA of 1.1  and a light-sheet NA of 0.31. \textbf{a)} The light-sheet microscope illumination spreads along its propagation axis (orange arrow) such that only a center strip in the FoV (marked \lq useable FoV\rq) is optically sectioned and hence well resolved. \textbf{b)} Results comparing standard 3D deconvolution and sheet deconvolution on fluorescent beads (top row) and SU8686 cells (bottom row), with maximum intensity projections along each axis. Unlike standard deconvolution, sheet deconvolution resolves features nearly as well at the edges of the FoV as in the center (see insets). \textbf{c)} Plots of lateral and axial FWHM of reconstructed beads (from \textbf{b}) as a function of their distance from the center along the spatially-varying axis. For sheet deconvolution, the FWHM values grow less in lateral and axial diameter than that of standard deconvolution and differs by at most $500$ nm in diameter from center to the top/bottom of the FoV.}
 \label{fig:light-sheet}
\end{figure} 

Revolution-invariance is not the only form of symmetry found in practical, spatially-varying systems. For example, light-sheet fluorescence microscopy (LSFM) exhibits another type of symmetry, in which the light-sheet illumination causes spatial variance, but only along one axis (see Fig.~\ref{fig:light-sheet}a)). The thinnest section, or waist, of the light-sheet is focused in the center of the FoV and becomes thicker as a function of the distance from the center. Since the light-sheet is constant along the axis orthogonal to the focusing direction, this variance only occurs along one dimension. The resulting PSF grows in axial extent as it moves along one of the lateral dimensions, but stays constant in the other lateral dimension.

This symmetry can be used to obtain a fast spatially-varying forward model of the LSFM system. By employing the same mathematical strategy used in the ring convolution theorem, we can derive \emph{sheet convolution}, an exact forward model for LSFM (assuming a space-invariant objective lens). Just like ring convolution, sheet convolution leverages symmetry by convolving over the two dimensions that are spatially-invariant while integrating over the one that does vary. As a result, sheet convolution enjoys a similar improvement in computational performance as ring convolution; instead of the $O(N^6)$ scaling of the general three-dimensional spatially-varying forward model, it achieves a $O(N^4log(N))$ scaling. This improvement enables rigorous, spatially-varying LSFM deconvolution where it once was computationally infeasible. All that remains is to obtain the 3D PSF at each point along the light-sheet focusing axis. This can be done in multiple ways using only a single calibration volume; the experimental PSFs from this volume are either interpolated at the missing locations or are used to fit the parameters of a PSF model such as the Gibson-Lanni model~\cite{gibson1991experimental, li2017fast}. We tried both and found that the interpolation method is better for sparser samples, while the fitted model is better for dense samples. Details along with the derivation of sheet convolution/deconvolution are in Sec.~\ref{methods:sheet} and simulations are in Appendix~\ref{appendix:sdm}.

To combat the inherent spatial-variance of LSFM, practitioners typically employ tiling LSFM~\cite{gao2015extend}, in which only the thin lateral slab of the image that lies within the light-sheet waist is kept. To acquire the full FoV, the sample is shifted after each acquisition to ensure that each part of it is imaged within the light-sheet waist. The final image is then stitched together from multiple acquisitions. However, by using our sheet deconvolution approach, the parts of each acquisition that lie outside of the waist can be recovered computationally, thereby expanding the usable FoV of the LSFM system and reducing the number of acquisitions needed for an object with a large lateral extent. While there are techniques, like axially-swept LSFM~\cite{dean2022isotropic}, which speed up the process in hardware, a purely computational solution is desirable for its ease of use, reduced irradiation, and flexibility. 

Figure~\ref{fig:light-sheet} shows a demonstration of sheet deconvolution on LSFM data. For the experiment, a 3D stack of randomly-scattered fluorescent nanospheres in agarose gel was imaged and used to obtain PSFs at each position along the spatially-varying axis (see Sec.~\ref{methods:sheet}). Then, using these PSFs, sheet deconvolution is applied to samples of two types: fluorescent beads and SU8686 cells. Unlike standard deconvolution, sheet deconvolution is able to better resolve features that do not lie in the light-sheet waist and create a more homogeneous axial resolution across the entire FoV than standard deconvolution does. We quantified the resolution improvement by measuring the lateral and axial full-width half-maximum (FWHM) of fluorescence beads across the entire FoV after applying the two deconvolution methods. As expected, the lateral FWHM does not change significantly since the PSF spatial variance occurs primarily in $z$, but the axial FWHM values from sheet deconvolution are on average $~300$ nm smaller than those from standard deconvolution with $~200$ nm less standard deviation. This increase in resolution outside of the beam waist can reveal subcellular-scale features that were previously only accessible by shifting the sample. In this experiment, sheet deconvolution ran on a volume size of $512 \times 512 \times 160$ in about 7 minutes, which is orders of magnitude faster than solving a spatially-varying deblurring method without leveraging symmetry. With sheet deconvolution, users of LSFM can speed up the capture of large samples while still getting high resolution across the entire FoV. 

\subsection{Extensions and Variations}
\label{sec:results:extensions}

We have demonstrated our methods for rotational and lateral symmetry, but expect that the extension to other symmetries would be analogous. In addition, elements of RDM may find use even when the system is not space-varying, or when calibration data is not available.

\subsubsection*{Space-invariant systems}
If aberrations are not space-variant, standard deconvolution should perform as well as RDM and be more computationally efficient. However, the first part of the RDM pipeline can still provide value by estimating the spherical aberration coefficient from a calibration image via Seidel fitting. With this coefficient, we can generate a synthetic center PSF and perform deconvolution. We call this procedure \emph{Seidel deconvolution}, and find that it essentially denoises the PSF measurement since it finds the closest synthetic PSF to the measured one. In the results shown in Fig.~\ref{fig:4}, Seidel deconvolution resolves smaller features and gives a cleaner reconstructed image than standard deconvolution with the measured PSF. Fitting synthetic PSFs for the purpose of deconvolution is not new; existing work has fit a variety of parametric models to the experimentally measured PSF including a 2D Gaussian distribution~\cite{YangGaussian}, Gaussian mixture model~\cite{samu}, Zernike basis~\cite{Zheng:13, Shao:19, Maalouf:11,Bech, hanser2003phase, hanser04}, spherical aberration diffraction model~\cite{Panka}, and Seidel ray model~\cite{Wang_radial_ray, Simpkin}, but to our knowledge, the fitting of Seidel polynomials in the pupil function is novel. 

\begin{figure}[tbh!]
 \centering
\includegraphics[width=\linewidth]{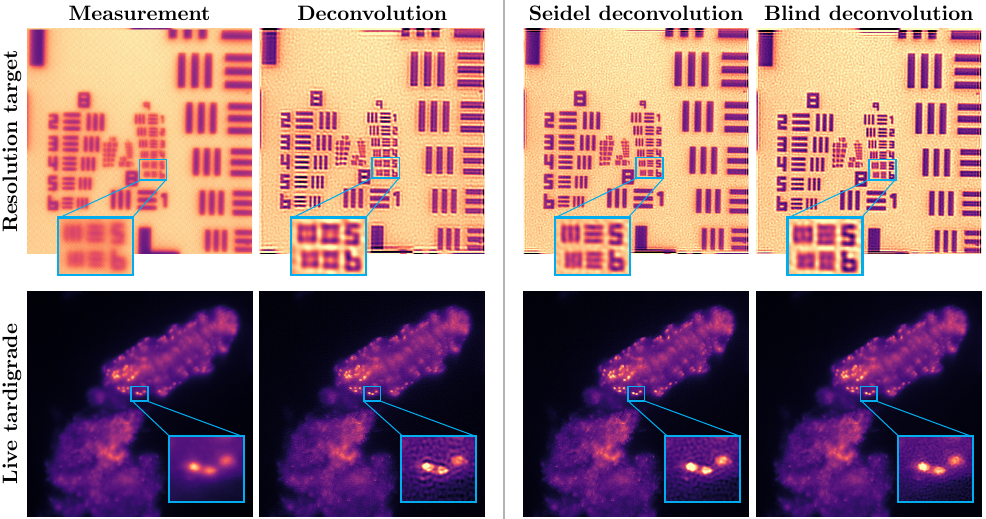}
 \caption{\textbf{Variations of RDM for space-invariant systems.} We compare results from our Seidel and blind deconvolution algorithms to that of standard deconvolution for images with a USAF test target and live tardigrades. Even for space-invariant systems, our methods outperform standard deconvolution by using a synthetic PSF, which prevents artifacts and loss of resolution from noise-based artifacts. Our blind deconvolution method usually estimates the spherical Seidel coefficient well; however, for certain images, the blind method can overestimate the coefficient, leading to oversharpening of the image, as in the USAF resolution chart.
 }
 \label{fig:4}
\end{figure} 

\subsubsection*{Blind deblurring}
One intriguing extension of RDM is to perform deblurring without calibration data (\emph{blind deblurring}). Such a situation may arise because measuring PSFs is difficult, inconvenient, or not available for images captured in the past. Blind deblurring involves joint estimation of the PSF and the deblurred object from a single image (see Methods~\ref{methods:blind}). In principal, this can be done for any spatially-varying system; however, it is computationally intensive, so we demonstrate here only the simple case of a space-invariant system. Thus, only the spherical Seidel coefficient needs to be estimated. Results are shown in Fig.~\ref{fig:4} along with non-blind deconvolution methods for comparison. The blind deconvolution result is similar to that of Seidel deconvolution, though cannot resolve the smallest features and has overly high contrast due to noise. Future work extending this idea to spatially-varying systems could use DeepRD, which is more computationally efficient than ring deconvolution, to iteratively search over the space of deblurring networks and choose the network with the sharpest reconstruction.



\section{Discussion}
In summary, we developed a new pipeline for image deblurring in rotationally-symmetric systems, called  Ring Deconvolution Microscopy (RDM). RDM encompasses both an accurate and analytically-derived deblurring technique, and its fast alternative, Deep Ring Deconvolution (DeepRD). Like standard deconvolution microscopy, our methods only require a single calibration image; however, they offer space-varying aberration correction. We back RDM with a new theory of imaging under rotational symmetry, which we call linear revolution-invariance (LRI), and an implementation of the LRI forward model (ring convolution). For RDM calibration we also develop a procedure for fitting Seidel aberration coefficients from a single calibration image of randomly-placed point sources. In order to show generality of our ideas, we further derive, implement, and test an analogous method that exploits linear (instead of radial) symmetry, for applications in light-sheet microscopy. We verify the accuracy of our deblurring methods in both simulation and experimentally over four diverse microscopy modalities. 

We hope that RDM will ultimately replace deconvolution microscopy as standard practice in widespread applications from biology to astronomy. We believe that RDM will find most use in systems that approach optical extremes such as miniature microscopes or large FoV systems, but may also empower optical designers to simplify hardware knowing they have the ability to better correct for aberrations digitally. RDM is well-suited to dynamic conditions in which the system or sample is changing, as long as the system calibration can be updated accordingly. For simpler cases that can be directly modeled, such as the time-varying deformation of an imaging fiber, it should be possible to update the initial calibrated PSFs with a theoretical model (~\cite{ploschner2015seeing, wen2023single}). 

We intend RDM to be a living, breathing tool with constant improvements to its speed and accuracy. Already we have seen a twenty-fold decrease in ring deconvolution's runtime in preliminary experiments by parallelizing it over multiple GPUs using the novel Chromatix optical framework~\cite{chromatix_2023}. Moreover, the constant improvement in deep learning architecture, including better conditional models~\cite{chauhan2024brief}, can also improve the performance of DeepRD in future. We plan to continually update our \href{https://github.com/apsk14/lri}{codebase} with these improvements.

\section{Methods}
\label{sec:methods}
\subsection{Ring Convolution and Ring Deconvolution}
    \label{subsec:ring}
    We begin with a primer on notation.
    Let $g(u,v)$ describe the object's intensity at $(u,v)$ and $h(x,y;u,v)$ describe the space-varying PSF - the intensity at $(x,y)$ of the PSF generated by a point source at $(u,v)$.
    We further use the notation $\tilde{g}$ to denote the transformation of $g$ to polar coordinates.
    Then the final image intensity  $f(x,y)$ of a linear optical system is formed by the superposition integral~\cite{goodman2005introduction}: 
	\begin{equation}
		\label{eq:lin}
		f(x,y) = \int \int g(u,v)h(x,y;u,v)dudv.
	\end{equation}
    This equation is the system \emph{forward model} for a linear space-varying system---it describes the image as a function of the object and PSFs at differnt locations in the FoV.
    
    standard image deconvolution approximates the system as linear space-invariant (LSI), which means the PSF is the same at all positions in the FoV, $h(x,y;u,v) = h(x-u,y-v;0,0)$. This simplifies the forward model in Eq.~\eqref{eq:lin} by reducing it to a convolution with a single PSF. This greatly simplifies the computation for forward and inverse problems, but at the cost of being inaccurate for space-varying aberrations.    

    In this paper, we incorporate radially-varying aberrations analytically into the forward model in order to provide a middle ground between purely space-invariant and completely space-varying systems. The assumption is that the system is \emph{linear revolution invariant} (LRI) - i.e. its physical configuration is symmetric about the optical axis (see Fig.~\ref{fig:1}a). This is true of many typical optical imaging systems.
    Our core observation is that all LRI optical systems satisfy
    \begin{equation}
        \label{eq:lri-assn}
        \tilh(\rho, \phi; r, \theta) =  \tilh(\rho, \phi - \theta; r, 0).
    \end{equation}
        
    Under this assumption, the object intensity can be written as what we call a \emph{ring convolution}, denoted by $ f \triangleq g \lri h$. By substituting~\eqref{eq:lri-assn} into \eqref{eq:lin} with $x = \rho cos \phi$ and $y = \rho sin \phi$, we get
    \begin{equation}
        \label{eq:ring-convolution}
        f(x,y) = (g \lri h) (x,y) =  \int_0^\infty r (\tilde{g} *_{\theta} \tilh ) (\sqrt{x^2+y^2},\tan^{-1}(y/x);r,0) dr,
    \end{equation}
    where the $*_{\theta}$ operator indicates a one-dimensional convolution over the $\theta$ dimension. The $r$ arises in the deconvolution since we are integrating over object space  $(u,v)$ in polar coordinates $(r,\theta)$. This ring-wise computation, wherein points at different radii are filtered heterogeneously, is consistent with the underlying intuition in LRI: the blur varies radially.
    Our first main result is an efficient, FFT-based inversion of the ring convolution.
    \begin{theorem}[Ring Convolution Theorem]
    	\label{thm:lri_conv}
    	Under LRI, where $\F_\Theta$ is a 1D Fourier transform over $\theta$, 
    	\begin{equation}
    		\label{eq:lri_conv}
    		\tilf(\rho,\phi) = \F_\Theta^{-1}\Big\{\int r\F_\Theta\{\tilde{g}(r,\theta)\}\F_\Theta\{\tilh(\rho, \theta; r)\}dr\Big\}(\phi).
    	\end{equation}
    \end{theorem}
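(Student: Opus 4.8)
The plan is to reduce the theorem to a single application of the angular convolution theorem, after using the LRI hypothesis to expose a convolution structure in the angular variable. First I would rewrite the Cartesian forward model~\eqref{eq:lin} in polar coordinates, sending $(u,v)\mapsto(r,\theta)$ and $(x,y)\mapsto(\rho,\phi)$ so that the area element $du\,dv$ becomes $r\,dr\,d\theta$:
\[
\tilf(\rho,\phi) = \int_0^\infty\!\!\int_0^{2\pi} r\,\tilde{g}(r,\theta)\,\tilh(\rho,\phi;r,\theta)\,d\theta\,dr.
\]
I would then invoke the LRI identity~\eqref{eq:lri-assn} to replace $\tilh(\rho,\phi;r,\theta)$ by $\tilh(\rho,\phi-\theta;r,0)$, after which the inner integral depends on $\phi$ and $\theta$ only through the difference $\phi-\theta$ and is therefore a periodic convolution $(\tilde{g}*_\theta\tilh)$ in the angular variable at each fixed radius $r$. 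This recovers the definition of ring convolution in~\eqref{eq:ring-convolution} and is the conceptual heart of the argument.

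Next I would apply the angular Fourier transform $\F_\Theta$ to both sides in the output angle. Since the angle is $2\pi$-periodic, $\F_\Theta$ is a Fourier series and the convolution theorem acts mode-by-mode, giving $\F_\Theta\{(\tilde{g}*_\theta\tilh)(\rho,\cdot;r)\}=\F_\Theta\{\tilde{g}(r,\cdot)\}\,\F_\Theta\{\tilh(\rho,\cdot;r)\}$ at each angular frequency. Interchanging $\F_\Theta$ with the radial integral $\int_0^\infty r\,dr$ and using linearity then yields
\[
\F_\Theta\{\tilf(\rho,\cdot)\} = \int_0^\infty r\,\F_\Theta\{\tilde{g}(r,\theta)\}\,\F_\Theta\{\tilh(\rho,\theta;r)\}\,dr,
\]
and applying $\F_\Theta^{-1}$ reproduces~\eqref{eq:lri_conv} exactly. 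In effect the theorem says ring convolution is \emph{diagonalized} in the angular Fourier domain, in direct analogy with how ordinary convolution is diagonalized by the full Fourier transform.

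The two places that need care — and the main obstacle — are justifying the interchange of $\F_\Theta$ with the radial integral and pinning down the precise sense of the angular convolution theorem. The interchange is a Fubini-type argument requiring a mild integrability or decay hypothesis on $\tilde{g}$ and $\tilh$ so that the iterated integral is absolutely convergent; this is reasonable for physically realizable PSFs and compactly supported (or square-integrable) objects. The convolution theorem itself is cleanest in the periodic setting, where $\F_\Theta$ denotes the Fourier series and the product is taken coefficient-wise; one must verify the $0$-to-$2\pi$ integration limits and the $2\pi$-periodicity of both $\tilde{g}(r,\cdot)$ and $\tilh(\rho,\cdot;r)$ so that the relevant object is a circular rather than a linear convolution. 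Everything remaining is bookkeeping of the polar change of variables and linearity.
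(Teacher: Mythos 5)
Your proof is correct and follows essentially the same route as the paper's: expose the angular convolution structure at each fixed radius, apply the 1D (circular) convolution theorem, and use a Fubini-type interchange of the angular Fourier transform with the radial integral. The only difference is that the paper's proof takes the ring-convolution representation~\eqref{eq:ring-convolution} as its starting point (asserted in the text as a consequence of LRI), whereas you derive it explicitly from the superposition integral~\eqref{eq:lin}, the polar Jacobian, and the LRI identity~\eqref{eq:lri-assn} --- a worthwhile addition, since that reduction is where the physical assumption actually enters.
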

    \begin{proof}
        \renewcommand{\qedsymbol}{}
        Since the given system is LRI,~\eqref{eq:ring-convolution} holds. 
        Substituting $\rho = \sqrt{x^2 + y^2}$ and $\phi = \tan^{-1}(y/x)$, we can rewrite~\eqref{eq:ring-convolution} as
        \begin{equation}
            \tilf(\rho,\phi) =  \int_0^\infty r (\tilde{g} *_{\theta} \tilh ) (\rho,\phi;r,0) dr.
            \end{equation}
        Applying the Fourier Convolution Theorem to the one-dimensional convolution on the right-hand side yields
        	\begin{equation}
        		\tilf(\rho,\phi) = \int r\F_\Theta^{-1}\Big\{\F_\Theta\{\tilde{g}(r,\theta)\}\F_\Theta\{\tilh(\rho, \theta; r)\}\Big\}(\phi)dr,
        	\end{equation}
        
        \noindent where $\F_\Theta$ is the one dimensional Fourier transform over $\theta$. 
        By Fubini's theorem, we pull the inverse Fourier Transform outside of the integral, which gives
        \begin{equation}
        		\tilf(\rho,\phi) = \F_\Theta^{-1}\Big\{\int r\F_\Theta\{\tilde{g}(r,\theta)\}\F_\Theta\{\tilh(\rho, \theta; r)\}dr\Big\}(\phi). \; \; \; \; \blacksquare
	\end{equation}

    \end{proof}
    Ring convolution has an efficient and convex formulation for computing its inverse, ring deconvolution:
    \begin{equation}
    \label{eq:lri_opt}
    	\hat{g} = \argmin_{\bar{g}}|| f - \bar{g}  \lri h||_2^2.
    \end{equation}
    This problem can be solved efficiently via an iterative least squares solver using Algorithm~\ref{alg:lri_forward} as a substep. A Fourier interpretation of ring convolution is provided in Appendix~\ref{appendix:fourier}. While the above results are all rigorous, the discrete time implementations of them have small, but nonzero errors due to discretization. For example, the polar transformation in Algorithm~\ref{alg:lri_forward} requires a small amount of interpolation. As is the case with  standard deconvolution, there are conditions for which ring convolution is not fully invertible and consequently ring deconvolution will not recover the sample exactly. The diffraction-limit, for example, manifests in PSFs whose frequency spectrum is bandlimited, rendering  frequencies in the sample that are past the bandlimit irrecoverable. This can also happen if   certain frequencies in the system's transfer function are below the noise floor. In such cases ring deconvolution, because it is convex, will return an estimate of the sample that is closest in $l2$ norm to the true sample. Regularization can improve this result even further by leveraging prior knowledge about the sample in question.  
    
    \begin{algorithm}[t]
\caption{Ring convolution} 
	\label{alg:lri_forward}

	\textbf{Input:} $N\times N$ pixel image $g$; PSFs along one radial line $h^{(j)}$, $j = 1, \dots, K$; corresponding distances $r_j$, $j = 1, \dots, K$ of each PSF from the center.
	
	\textbf{Output:} LRI blurred image $f$
	
	\begin{algorithmic}[1]
		\State $\tilde{g} \gets$ polarTransform($g$) \Comment{polar dimensions are $M \times K$, angle by radius} 
		\State $\tilde{f} \gets$ zeros($M\times K$) \Comment{initialize the output in polar form as an all zero matrix}
		
		\For{$j = 1, \dots, K$}
		\State $\tilde{h}^{(j)} \gets$ polarTransform($h^{(j)}$) 
		
		\For {$i = 1, \dots, K$}
		 \State $\tilde{f}_{:,i} \gets \tilde{f}_{:,i} +  \ifft\{ r_j \fft \{\tilde{g}_{:,j}\} \fft \{\tilde{h}^{(j)}_{:,i}\}\}  $ \Comment{compute polar output ring by ring, FFT is 1D}
		
		\EndFor
		\EndFor
		
		\State $f \gets $ inversePolarTransform($\tilde{f}$)
	\end{algorithmic}
	
    \end{algorithm}

    \subsection{Fitting Seidel Coefficients to PSFs}
    \label{subsec:seidel}

    Ring (de)convolution algorithms require $h$, the collection of PSFs along one radial line in the FoV.
    Fortunately, there is a convenient and compact alternative to measuring these manually.
    The Seidel coefficients~\cite{born2013principles,voelz2011computational} are a polynomial basis that can represent any rotationally-symmetric system.
    We mathematically describe the form of these aberrations in Appendix~\ref{appendix:seidel}.


    The estimation procedure for estimating the Seidel coefficients involves fitting them to a single, sparse image of a few randomly scattered point sources (e.g. fluorophores on a microscope slide).
    Such an image is usually easier to obtain than an image of an isolated point source in the center of the FoV. The presence of off-axis PSFs in the calibration image provides information about all aberration coefficients. Though it may be possible to fit these coefficients from a single off-axis PSF, we find that a few, randomly placed PSFs provides a robust fit. The locations of the point sources are not known \textit{a priori} and are estimated via local peak detection. The optical center of the system is also needed in order to properly localize the PSFs; this can be found heuristically or by using common optical center finding algorithms such as~\cite{5206746}.
    
    Let $r_1, \ldots, r_J$ be the object-plane radii of the $J$ points in the calibration image. 
    We then find the primary Seidel coefficients $\hat{\omega}$ whose generated PSFs best match the measured PSFs. 
    Once again, this searching procedure is succinctly stated as an optimization problem,
    \begin{equation}
        \label{eq:seidel-fit}
    	\hat{\omega} = \argmin_{\bar{\omega}} \sum_{j=1}^J ||h^{(j)} -  \F^{-1}\{ P({\bar{\omega}})^{(j)} \}||_2^2,
    \end{equation}
    \noindent where $P({\bar{\omega}})^{(j)}$ is the pupil function with Seidel coefficients $\bar{w}$ from a point source at distance $r_j$ from the center of the FoV.
    
    It has been shown that for LRI systems, the optimal fit $\hat{w}$ achieves a diminishing error~\cite{born2013principles}.
    Furthermore, the 5 primary Seidel coefficients index the dominant aberrations present in practical imaging systems: sphere, coma, astigmatism, field curvature, and distortion. For more complex aberrations it is possible to add higher-order Seidel coefficients to the fit. See Appendix~\ref{appendix:higher-order} for an exploration of the higher-order terms.
    In practice, we fit these 5 coefficients via the ADAM optimizer~\cite{kingma2014adam} and obtain reasonable local minima even though the problem is nonconvex (see Fig.~\ref{fig:2}).
    Armed with the estimated Seidel coefficients, we can generate synthetic PSFs at any radius. 
    
\subsection{Deep Ring Deconvolution (DeepRD)}
\label{subsec:deeprd}
DeepRD is designed to incorporate the Seidel coefficients into the deblurring process in a parameter-efficient and interpretable manner. To that end, we propose a neural network architecture inspired by the physical LRI image formation model. The first key design element is to use a modified Hypernetwork~\cite{ha2017hypernetworks}, a network which predicts the weights of another, task-specific ``primary" network. In DeepRD, an MLP-based hypernetwork takes in Seidel coefficients and produces a deep deblurring network that specifically works for the given coefficients. Our second key design element is the use of ringwise convolution kernels. Specifically, the hypernetwork produces CNN kernels for each radius which the primary network applies ring-by-ring. This replicates the revolution-invariance assumption central to ring deconvolution and eases the space-invariant constraint of typical convolutional kernels. Together, this design enables a neural network that is a fraction the size of a conventional U-Net with improved performance and generalization. To our knoweledge, we are the first to utilize learnable ring convolutions in the context of deep learning.

To produce a training dataset for DeepRD, we synthetically generate blurred input images from the Div2k and CARE fluorescence microscopy datasets using ring convolution with randomly sampled sets of Seidel coefficients. We must sample coefficients which adequately cover the attributes of realistic imaging systems. Each coefficient, which is in units of waves, is drawn from a uniform distribution and a noise-perturbed grid in the range $0-3$ waves---we emperically find this range to cover the aberrations of systems ranging from perfect to highly aberrated. Note that without ring convolution, such a dataset would be prohibitively slow to produce. With that in mind, we will release both an implementation of ring convolution as well as our dataset.

DeepRD accepts a two-part input, a blurred image and its corresponding primary Seidel coefficients. The model is then supervised with the unblurred ground truth image. We find that this physically-grounded synthetic dataset generation is effective to train models that generalize to real-world evaluation.

\subsection{Sheet Convolution, Deconvolution, and the LSFM PSF Model}
    \label{methods:sheet}

    Here, we derive the efficient sheet convolution and deconvolution mathematically.
    The derivations are roughly similar to those for ring convolution, but with a different symmetry.
    Because of the similarities, our exposition here is more terse.

    \subsubsection{Sheet convolution and deconvolution}
    LSFM is a three-dimensional imaging modality and therefore the object and PSF will be a function of three space variables. Let $g(u,v,t)$ describe the object intensity at location $(u,v,t)$, and $h(x,y,z;u,v,t)$ describe the space-varying PSF due to Gaussian light-sheet illumination focused along the $u$ dimension (the choice of $u$ as the first spatial dimension is arbitrary). The superposition integral for a linear optical system can then be written as  
	\begin{equation}
		\label{eq:sheet-lin}
		f(x,y,z) = \int \int \int g(u,v,t)h(x,y,z;u,v,t)dudvdt,
	\end{equation}
    analogously to~\eqref{eq:lin}.
    As before, we will incorporate the symmetry of LSFM to simplify the above display.
    The symmetry assumption states that the imaging optics are space-invariant, but the PSF varies in the $u$ direction due to the beam profile.
    Recall also that the total PSF is the product of the imaging PSF with the illumination PSF, which is varying.
    The following equation encodes these assumptions:
    \begin{equation}
		\label{eq:sheet-psf}
		h(x,y,z;u,v,t) = h(x,y-v,z-t;u,0,0).
	\end{equation}
    As before, plugging this into the linear forward model gives us sheet convolution:
    \begin{equation}
		\label{eq:sheet-conv}
		f(x,y,z) = \int \int \int g(u,v,t)h(x,y-v,z-t;u,0,0)dudvdt = \int (g *_{v,t} h)(x,y,z;u,0,0) du,
	\end{equation}
    where $*_{v,t}$ represents a 2D convolution along the $v$ and $t$ axes.
    From the equation we see that the image is the integral of two-dimensional convolutions of $y-z$ sheets of the object with $y-z$ sheets of the PSFs.
    To compute this integral, we only need to know the PSF at all values of $u$, and the $v$ and $t$ dimensions do not matter.

    With a forward model in hand, sheet deconvolution can be solved exactly the same way as ring deconvolution: iterative least squares (Eq.~\eqref{eq:lri_opt}). The discretization and computational implementation of these algorithms mimic those of ring convolution and can be found in our \href{https://github.com/apsk14/lri}{codebase}.

    \subsubsection{LSFM PSF model}
    The remaining component of LSFM deblurring strategy, is to obtain the necessary set of three-dimensional PSFs along one dimension. Similar to our earlier strategy for calibrating ring convolution/deconvolution, we can do so by imaging a single calibration volume containing randomly located, sparsely distributed point sources (e.g., fluorescent beads embedded in agarose). The resulting image stack will contain a random set of PSFs at different $u$ locations. 
    
    The simplest option is to directly estimate the PSFs at the missing $u$ locations by taking the convex combination of the two closest measured PSFs from the calibration stack. We use this strategy for the bead experiment in Fig.~\ref{fig:light-sheet}.

    The second option is more akin to our Seidel fitting procedure; it involves parametrization of the spatially-varying PSFs with a unified, differentiable model. In this case we develop a modified version of popular Gibson-Lanni three-dimensional PSF model for LSFM~\cite{gibson1991experimental}.  We will focus on our modifications of the model; further details about the Gibson-Lanni model and its variants are ubiquitous in the literature~\cite{gibson1991experimental, li2017fast}. Given a vector $p$ of system characteristics (e.g., sample refractive index) the Gibson-Lanni model calculates the optical path difference (OPD) between the ideal and experimental imaging systems. Integrating over this OPD gives the system's three-dimensional PSF. Our LSFM PSF model takes this PSF and truncates its $z$ extent according to the light-sheet illumination thickness, thereby creating a spatially-varying PSF in $u$. Formally, let the Gibson-Lanni PSF be $h_{p}$, then our LSFM PSF
    \begin{equation}
        h(x,y,z;u,v,t) = h_{p}(x-u,y-v,z-t)\frac{1}{\sigma(u) \sqrt{2\pi} } e^{-\frac{1}{2}\left(\frac{z-t}{\sigma(u)}\right)^2} ,
    \end{equation}
    where 
    \begin{equation}
        \label{eq:spread}
        \sigma(u) = \alpha \sqrt{1 + (\beta u)^2 }.
    \end{equation}
    The above equations arise from modeling the light-sheet as a Gaussian beam along $u$; its spread $\sigma(u)$ changes hyperbolically along the focus direction and its profile at a given $u$ is a Gaussian with variance $\sigma(u)^2$. We have two control knobs for it, $\alpha$, which controls the $z$ spread of the PSF at the thinnest section or waist, and $\beta$, which controls the rate that the spread increases as a function $u$. Given the calibration stack, we can optimize $\alpha$ and $\beta$ such that our PSF model produces PSFs close to the experimental ones. Since the PSF is a differentiable function of $(\alpha, \beta)$, they can be solved for using gradient-based iterative optimization, just like for Eq.~\eqref{eq:seidel-fit}. It is also possible to incorporate elements of $p$ into this optimization if they are not known \textit{a priori}. This model is used to calibrate sheet deconvolution on the cell sample in Fig.~\ref{fig:light-sheet}.

    In order to deploy the above models experimentally, one must ensure that the point sources are sufficiently sparse such that the PSFs are mostly non-overlapping. The exact point source locations are not important and can be estimated. Noisy PSFs are best handled with the Gibson-Lanni fitting method, which uses synthetic PSFs but still fits the measured PSFs well (see Appendix~\ref{appendix:sdm}). The interpolation method is more sensitive to noise but can be denoised effectively with thresholding and median filtering. The fitting procedures are detailed in our \href{https://github.com/apsk14/lri}{codebase}, and follow the same general structure: first take a calibration stack of randomly-scattered point sources, estimate the locations of the PSFs using local peak finding algorithms, produce generated PSFs at those locations, and use the error between the generated and measured PSF to update the PSF model. In the case of interpolation, the last step is replaced by linearly interpolating the two closest measured PSFs to the desired location to create the generated PSF there.

\subsection{Blind Deblurring}
\label{methods:blind}
Our version of blind deconvolution also takes advantage of the Seidel coefficients. Given just a blurry image, we start by randomly picking a value for the spherical Seidel coefficient. Then we use this value to synthetically generate a point spread function, and use it to deconvolve the blurry image. We then compute the sum of the spatial gradient of the resulting deconvolved image (this acts as a surrogate measure of image sharpness) and use its negative as a loss. We then minimize this loss (maximize the sharpness) by updating our initial guess of the spherical aberration coefficient using its gradient with respect to the loss function. Running this iteratively, we eventually converge to a final spherical aberration coefficient, generate a final synthetic PSF, and deconvolve the blurry image with this PSF to get the final result.

Note that this procedure generalizes to spatially-varying systems. We would instead jointly estimate all 5 Seidel coefficients and use ring deconvolution instead deconvolution at each step. This, however, is computationally expensive and requires generating $N$ (the image sidelength) PSFs per iterative step. We believe it is possible to do this more efficiently with DeepRD by replacing the ring deconvolution operation at each step with DeepRD. That is, we search the space of DeepRD networks for the one that produces the sharpest reconstruction. However, this is out of scope for this project, and we leave it as future work.

\subsection{Experimental Details}
Experimental details for the micro-endoscopy experiment can be found in~\cite{Turcotte:20}.

\label{subsec:exp_deets}
\subsubsection*{Sample Preparation} 
\noindent{\textbf{Live tardigrades}}
Tardigrades were mixed-staged adults of the eutardigrade species Hypsibius exemplaris Z151 (reclassified from Hypsibius dujardini in 2017), purchased from Sciento (Manchester, United Kingdom). Animals were cultured as described in~\cite{lyons2024survival}.
A mixture of starved and nonstarved tardigrades were stained overnight with Invitrogen nucleic acid gel fluorescent stain, whose excitation and emission maxima are 502 nm and 530 nm, respectively. Individual stained tardigrades were then isolated onto a glass slide for imaging. Meanwhile, the non-fluorescent samples (USAF resolution targets, and rabbit liver tissue) were obtained imaging-ready on glass slides. 

\noindent{\textbf{BPAE cells}}
Bovine pulmonary artery endothelial cells were obtained from Thermo Fisher. They are labeled with with MitoTracker™ Red CMXRos and Alexa Fluor™ 488 Phalloidin.

\noindent{\textbf{Beads}}
100nm Fluorescent beads were obtained from Polysciences (17150-10). For measuring the PSF in the light-sheet fluorescence microscope, we embedded the beads in 2\% agarose with a final density $~5x10^{-4}$ of the stock solution.

\noindent{\textbf{SU8686 cells}}
SU8686 cells labeled with F-tractin-mRuby were embedded in soft bovine collagen, and then fixed before imaging with light-sheet fluorescence microscope.

\subsubsection*{Imaging}
\noindent{\textbf{UCLA Miniscope}}.
We used the UCLA Miniscope v3 with the Ximea MU9PM-MBRD 12 bit, 2.2 micron pixel sensor. Optically, the Miniscope is comprised of a gradient index objective and achromat tube lens; further details are provided in \cite{aharoni}. In order obtain the system PSFs, we imaged $1 \mu$m fluorescent beads smeared on a glass slide. For deconvolution microscopy calibration, we repeatedly diluted the bead solution with isopropyl alcohol until we were able to sufficiently isolate a single bead, whereas for RDM calibration we used a single dilution and imaged a slide containing a sparse collection of beads. We used a custom 3D Prior stage controlled with Micromanager v1.4 and Pycromanager~\cite{pinkard2021pycro}.

\noindent{\textbf{Multicolor fluorescence microscope}}. 
We used a Nikon Plan Apo VC 100x Oil DIC N2 objective with 1.518 refractive index oil in a Nikon Eclipse Ti2 controlled with the Nikon NIS Elements Software. Images were taken with a Hamamatsu Orca Flash 4.0 camera with $0.065 \mu m$ pixel pitch. The PSFs were obtained with $0.01 \mu m$ FluoSpheres Yellow-Green$ 505/515 nm$ F8803 and FluoSpheres Red $580/605 nm$ F8801 beads. First, we diluted beads in water and then further in ethanol until sufficient sparsity was achieved. The bead solution was then smeared on a slide and left to dry. Finally, the beads were mounted with a drop of glycerol and sealed with nail polish.

\noindent{\textbf{Light-sheet fluorescence microscope}}.
We used a previously published setup for Field Synthesis~\cite{chang2019universal} that was operated without a ring mask, rendering it to a multi-directional selective plane illumination mSPIM~\cite{huisken2007even} system with a Gaussian sheet. The microscope equipped with 488 and 561 nm laser illumination, a Special Optics 28.5X/NA 0.67 illumination objective, and a Nikon 25X/NA 1.1 detection objective, and is controlled with a custom LabView program written by Coleman Technologies and is equipped with temperature control for long term live cell imaging.

\subsubsection*{Image processing}
All experimental images were captured and stored in a raw, unprocessed format (npy or tif). Miniscope images underwent hot pixel removal (detailed in the public code) and normalization prior to deblurring. These images were cropped afterward by 10 pixels in each dimension to remove edge artifacts. Multicolor images were downsampled by a factor of 2, separated into two channels and deblurred independently. After deblurring, the channels were recombined and globally contrast stretched for display. Pseudocoloring was done with ImageJ using the Green/Magenta LUT. These images were also cropped for edge artifacts. The details of the multimode fiber images can be found in~\cite{Turcotte:20}. The bead images were upsampled by 3x and convolved with a Gaussian kernel (3/2 pixel width) after deblurring. Neuron images were convolved with a Gaussian kernel (1 pixel width) after deblurring. This was done according to~\cite{Turcotte:20}. LSFM stacks were similarly cropped and contrast-stretched equally for each method for the purpose of display. For simulation data, images were normalized before deblurring and cropped after deblurring. All displayed PSFs were globally contrast stretched for display.

\subsubsection*{Computation}
PSF generation for the simulation experiments was done by synthetically generating pupil functions with the given Seidel coefficients \cite{voelz2011computational} (see~\href{https://github.com/apsk14/lri}{Github}). Computation was done using Python on a single GPU, either a  NVIDIA GeForce RTX 3090 or NVIDIA RTX A6000. For standard deconvolution the measured PSF was denoised through background subtraction and pixel-wise thresholding. For each $1024 \times 1024$ image from the Miniscope and high NA multicolor systems, ring deconvolution took about 115 seconds and DeepRD took about $125$ milliseconds. For each $512 \times 512$ image in simulation ring deconvolution took about 60 seconds and DeepRD took about $0.1$ seconds. For the $360 \times 360$ images from the micro-endoscope, ring deconvolution took about 20 seconds. For a single $512 \times 512 \times 160$ volume from the LSFM system, sheet deconvolution took about 7 minutes. 

All non-learning, iterative methods are solving linear least squares optimization problems (see~\eqref{eq:lri_opt}); we additionally add TV regularization to these and run them till convergence using an ADAM optimizer~\cite{kingma2014adam}. For each method, the hyperparameters---including learning rate and regularization strength---that provided the smallest loss and best qualitative results were used. For deconvolution we tried a variety of algorithms in addition to the iterative scheme, including Wiener filtering and Richardson-Lucy deconvolution, and used the best reconstructions, which was either iterative deconvolution or unsupervised wiener filtering~\cite{Orieux:10}.

Open source implementations of ring convolution, polar transform, Seidel fitting, and ring deconvolution as well as the light-sheet extension methods can be found in the \href{https://github.com/apsk14/lri}{codebase}. Our intent is for this codebase to function as an easy-to-use library such that any practitioner with any imaging system can utilize RDM with little-to-no overhead.
    
The baseline U-Net and DeepRD were both trained on ring-convolved images from the Div2k dataset. For the simulation results, both models were additionally fine-tuned on images from the CARE dataset. All models were trained till convergence of the validation loss and optimized over hyperparameters. 

\section{Code Availability}
The code for implementing ring convolution, ring deconvolution, DeepRD (including pretrained model weights) and Seidel PSF fitting along with tutorials on our experimental data are publicly available on Github (\url{https://github.com/apsk14/rdmpy}).

\section{Data Availability}
The data used in all of the imaging experiments (Miniscope, multicolor fluorescence, multimode fiber, and light-sheet) is publicly available on Box (\url{https://berkeley.box.com/s/zmsjjgmquwq2roh4d9qthcnv3rhwuidn}). Additional experimental data from the multimode fiber system can be requested from~\cite{Turcotte:20}(\url{https://opg.optica.org/boe/fulltext.cfm?uri=boe-11-8-4759&id=433935}). The datasets used to train and fine-tune DeepRD, and to evaluate the quantitative performance of the methods are also hosted on Box (\url{https://berkeley.box.com/s/vv3g6avhrr9agijmlj3b1153oo7x9gao}). These datasets were sourced from the CARE dataset~\cite{weigert2018content}  (\url{https://publications.mpi-cbg.de/publications-sites/7207/}) and the Div2k dataset~\cite{agustsson2017ntire}  (\url{https://data.vision.ee.ethz.ch/cvl/DIV2K/}). The high resolution pretraining dataset, due to its large memory useage, will be made available upon request. 

\section{Inclusion and Ethics}
This study included contributions from local researchers at UC Berkeley, UC San Francisco, UT Southwestern, and Nikon Imaging Center at Harvard Medical School. Data was acquired by local researchers at each location. These researchers are either included as authors or acknowledged below.

\section*{Acknowledgements}
We would like to thank the following organizations and individuals: Ana Lyons and Saul Kato's lab for providing the tardigrades for imaging, the authors of~\cite{Turcotte:20} for giving us access to the micro-endoscopy data, the Nikon Imaging Center at Harvard Medical School for their imaging support, Neerja Aggarwal at UC Berkeley for help and access to 3D printing, Gabriel Muhire Gihana from Danuser lab at UTSW for providing SU8686 cells, and Diptodip Deb from Janelia Research Campus in Virginia for running preliminary multi-GPU experiments.

A.K. was funded by the Berkeley Fellowship for Graduate Study. 
A.N.A. was supported by the Berkeley Fellowship for Graduate Study and the National Science Foundation Graduate Research Fellowship Program under Grant No. DGE 1752814. Any opinions, findings, and conclusions or recommendations expressed in this material are those of the author(s) and do not necessarily reflect the views of the National Science Foundation.
S.Y. is supported by Moore Foundation data-driven discovery.
This material is based upon work supported by the Air Force Office of Scientific Research under award number FA9550-22-1-0521. 
This publication has been made possible in part by CZI grant DAF2021-225666 and grant DOI https://doi.org/10.37921/192752jrgbn from the Chan Zuckerberg Initiative DAF, an advised fund of Silicon Valley Community Foundation (funder DOI 10.13039/100014989).

\printbibliography

\newpage

\appendix

\section{Additional comparisons}
\label{appendix:additional-comps}
Two other commonly used deconvolution methods are Gaussian kernel deconvolution and modal decomposition. Gaussian kernel deconvolution involves fitting a two-dimensional Gaussian distribution to an experimental PSF and using the fitted kernel for deconvolution. While common, this method assumes spatial-invariance and cannot correct spatially-varying aberrations. On the other hand, modal decomposition is a spatially-varying deconvolution technique which models the spatially-varying image formation as a weighted sum of convolutions with space-invariant kernels. These kernels and their corresponding weights can be estimated from a uniformly spaced set of PSFs via singular value decomposition. We note that this type of calibration generally cannot be done single-shot and requires multiple acquisitions with a high precision motion stage. Since the multimode fiber experiment has such calibration information, we use it as a test bed for these additional techniques. Figure~\ref{fig:grid-comp} shows these methods on a grid of beads taken with the same multimode fiber system used in~\cite{Turcotte:20}.

Ring deconvolution exhibits the best performance among these methods. Gaussian kernel deconvolution suffers from a lack of detail in its PSF compared to the experimental one. Meanwhile modal decomposition, despite using a grid of 21 by 21 PSFs, is unable to properly interpolate the PSFs in areas between the calibration points and provides erratic performance throughout the FoV. It is also worth noting that that modal decomposition could be combined with our Seidel fitting procedure to further improve its accessibility; this exploration however is beyond the scope of this work and is left for future exploration.

\begin{figure}[h]
 \centering
\includegraphics[width=\linewidth]{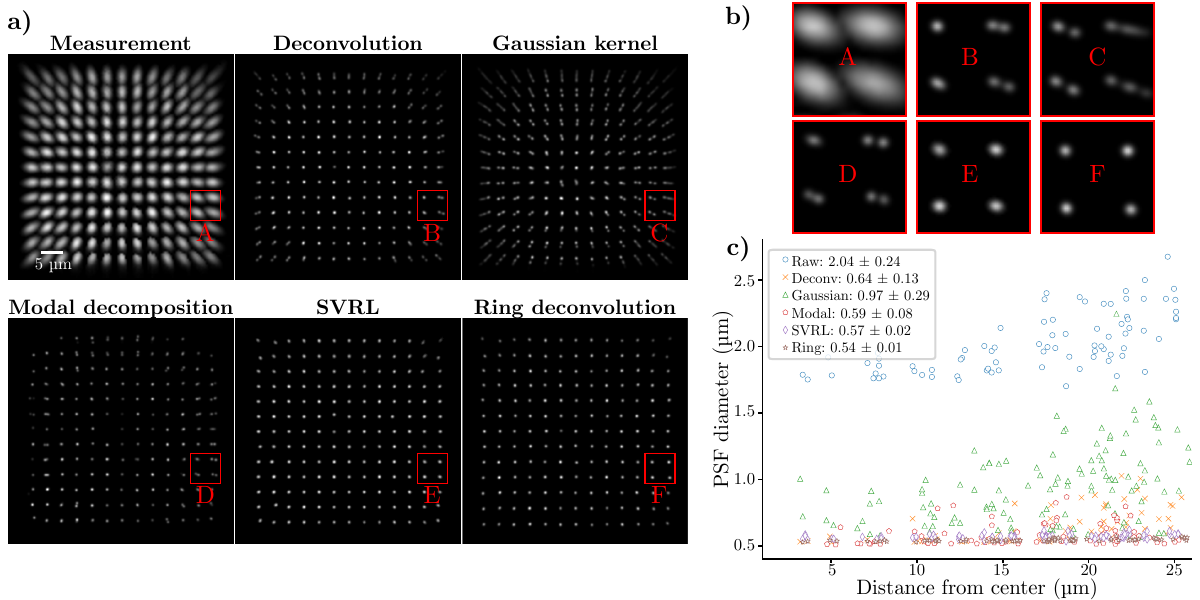}
 \caption{\textbf{Comparisons of various deblurring methods.} An evenly spaced grid of 13 x 13 point sources are imaged with the multimode fiber micro-endoscope from~\cite{Turcotte:20} and deblurred using a variety of methods. \textbf{a)} Results from each method. Top row (left to right) is the raw measurement, deconvolution with an experimental PSF and deconvolution with a Gaussian kernel fitted to the PSF. Bottom row (left to right) is deblurring via modal decomposition~\cite{yanny2020miniscope3d, flicker2005anisoplanatic}, SVRL~\cite{Turcotte:20}, and ring deconvolution. As seen in \textbf{b)}, the spatially-varying methods (bottom row) are superior to the deconvolution methods (top row) due to the substantial spatial variation in the PSF.
 Among the spatially-varying methods, ring deconvolution produces the smallest, most consistent beads. The quantitative results in \textbf{c)} also show that ring deconvolution has the smallest average bead radius with the least variation. Moreover, unlike the other methods, its performance does not degrade on beads far from the center of the FoV.
 }
 \label{fig:grid-comp}
\end{figure} 

\section{Quantifying the performance of SDM}

\label{appendix:sdm}

\begin{figure}[h]
 \centering
\includegraphics[width=\linewidth]{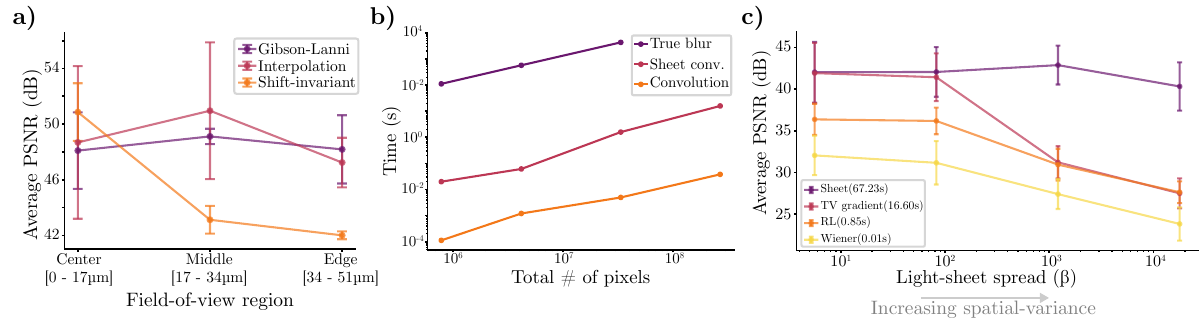}
\caption{\textbf{Quantifying Sheet deconvolution microscopy.} \textbf{a)} Error evaluation of PSF fitting models. N=6 image stacks of randomly-scattered point sources are acquired with a light-sheet microscope and fit using either with the modified Gibson-Lanni model, by interpolating the measured PSFs, or by using the (denoised) center PSF and assuming shift-invariance. The PSNRs between the generated PSFs and the measured PSFs from the remaining calibration image stacks is computed for all three PSF models. These PSNRs are stratified by regions split along the light-sheet spread direction; center is where the light-sheet is the thinnest, while edge is where the light-sheet is the most spread. The resulting average PSNRs and their standard deviations in each region are plotted as a function of region. \textbf{b)} Forward model computation time for different data sizes. The runtimes of three different light-sheet forward models are plotted as a function of data size; true blur (manually superimposing the shift-varying PSFs at every point in the 3D volumes), sheet convolution, and standard 3D convolution. Times are acquired by processing randomly-generated 3D volumes and are averaged over 10 trials. \textbf{c)} Quantitative evaluation of sheet deconvolution. N=20 3D volumes of randomly sized and oriented ellipsoids are generated. These objects are blurred via true blur using PSFs generated from our modified Gibson-Lanni model, and are corrupted with Poisson noise (SNR 15) to simulate light-sheet measurements. The simulated measurements are deblurred with sheet deconvolution, iterative gradient-based deconvolution with total variation regularization, Richardson-Lucy deconvolution, and Wiener deconvolution. The process is repeated with increasing values of the $\beta$ spread parameter (see~\ref{eq:spread}), beginning with a system that is shift-invariant ranging to a system which is highly shift-variant due to rapid light-sheet spread. The average PSNR over all 20 objects for each method is plotted as a function of the spread parameter.
 }
 \label{fig:sdm-quant}
\end{figure} 

The experimental evaluation of sheet deconvolution microscopy (SDM) (Sec.~\ref{subsec:ls}) highlights its resolution improvement and computational efficiency advantages. We further support these results by quantifying each part of the SDM pipeline. Figure~\ref{fig:sdm-quant} details these quantitative experiments. First, quantify how well our light-sheet PSF fitting models, modified Gibson-Lanni and interpolation, fit experimental PSFs and compare them to the standard shift-invariant model. To that end, we fit these models to experimental PSFs acquired from a light-sheet microscope and compute the PSNR between generated PSFs and measured PSFs. We plot these PSNRs as a function of the PSF’s location along the light-sheet spread, from the center, where the light-sheet is the thinnest, to the edges, where the light sheet is the most spread. Both of our models account for the shift-variance due to the light-sheet spread and consequently maintain accurate fitting across the entire FoV. In contrast, the standard shift-invariant model degrades away from the center, as expected. The interpolation model is higher variance than the modified Gibson-Lanni model; this is because the interpolation model is exact at the locations of the PSFs provided in the calibration image and worse in-between those PSFs, whereas the Gibson-Lanni fitting finds parameters that produce PSFs that fit well everywhere in the FoV, on average. 

Next, we measure the runtime of the corresponding forward models, sheet convolution and 3D convolution, on different data sizes and find that, while sheet convolution is slower than 3D convolution, it is much faster than computing the true blur, i.e., manually superimposing the PSF at each point in the 3D volume. The scaling of runtime with data size of sheet deconvolution and 3D deconvolution can be inferred from these results but ultimately also depends on the sample and noise properties. 

Finally, we quantify the accuracy of sheet deconvolution against standard deconvolution techniques, total variation (TV) iterative deconvolution, Richardson-Lucy deconvolution~\cite{Richardson, Lucy}, and Wiener deconvolution~\cite{Orieux:10}. To do this we simulate phantom 3D samples (randomly sized and oriented ellipsoids) and simulate a light-sheet measurement using the true blur model with Poisson noise. We then deblur the simulated measurement with each method and compute the PSNR of the result with the ground truth sample. We repeat this process for systems with increasing light-sheet spread (increasing spatial variance). When the system is shift-invariant, sheet deconvolution performs as well as the best deconvolution technique, TV regularized iterative deconvolution. As the system becomes increasingly spatially-variant, the performance of sheet deconvolution remains relatively constant, while the performance of all the deconvolution methods degrades severely.

\section{Simulation with Poisson noise}

\label{appendix:poisson}

The quantitative experiments from Fig.~\ref{fig:2} are rerun with Poisson noise instead of Gaussian noise. There are no notable differences in performances, with the same trends occurring as before: ring deconvolution is the best performing method, followed by DeepRD, the baseline U-Net, and finally standard deconvolution. Both deep learning models were updated with a small amount of training ($~100$ iterations) on Poisson-noised images from the training set. Ring deconvolution and standard deconvolution were run with additional total variation regularization to account for the Poisson noise. We found that an SNR range of 15-30 most accurately reflected experimental data and thus ran all simulations with the worst-case SNR of 15.

\begin{figure}[h]
 \centering
\includegraphics[width=\linewidth]{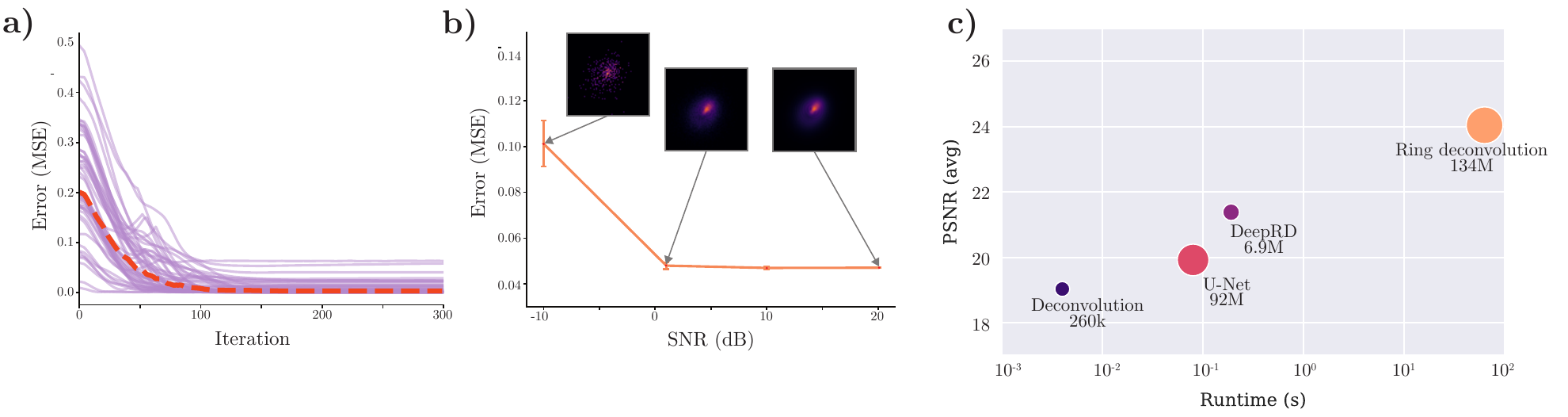}
 \caption{\textbf{Quantitative performance under Poisson noise.} \textbf{a)} A Poisson-noised image (SNR 15) of randomly placed PSFs is used to estimate the underlying Seidel coefficients, which can then be used to generate PSFs at any location. \textbf{e)} Seidel fitting error is plotted as a function of iteration in the optimization algorithm to demonstrate convergence. Each purple line is a different trial of n=50 trials with a randomly sampled set of underlying Seidel coefficients. The red dashed line is the per-iteration median.  \textbf{b)} Mean squared error of the fitted Seidel coefficients plotted as a function of SNR of the calibration image. The average coefficient error is plotted along with the variance (error bar) over n=50 random trials. Some example calibration PSFs are shown. Even under severe noise, the Seidel fit is still accurate. \textbf{c)} Accuracy (PSNR) vs runtime of each method (averaged over n=28 true blurred images using unseen coefficients), with the
number of model parameters (written below each circle) determining the size of the circle. Noise is Poisson distributed with SNR 15. 
 }
 \label{fig:poisson}
\end{figure} 

\section{Seidel Coefficients}
\label{appendix:seidel}

\begin{figure}[h]
 \centering
\includegraphics[width=\linewidth]{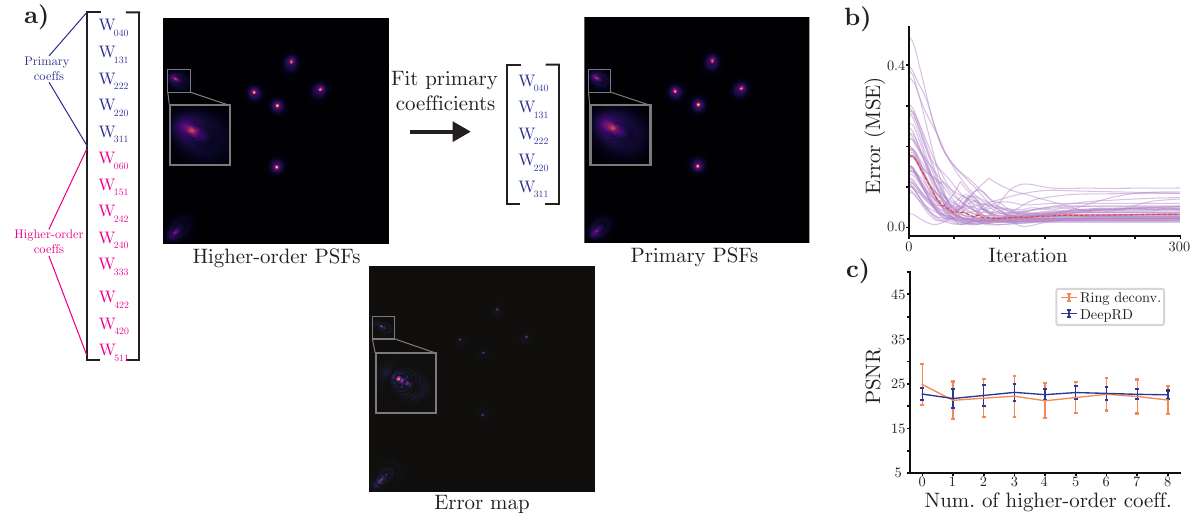}
 \caption{\textbf{Approximation of higher-order Seidel coefficients with primary Seidel coefficients.}  \textbf{a)} An example trial: a random set of 13 Seidel coefficients are selected and used to generate a noisy calibration image. Next, only 5 primary Seidel coefficients are fit to the calibration image and are used to generate estimated PSFs. An error map between the high-order and fitted PSFs is shown at the bottom; some of the high-frequency features are missed by the lower-order approximation. \textbf{b)} Error in the fit between the estimated primary coefficients and true primary coefficients in the presence of higher-order coefficients. 50 trials are plotted with the red curve indicating the median convergence. The correct primary coefficients are estimated a majority of the time. We hypothesize that the small bump in the plot where the error slightly increases before convergence is an inflation of the primary coefficients to better explain the higher-order PSFs. \textbf{c)} Images blurred with higher-order PSFs are deblurred with RDM using lower order approximations of the PSFs and the PSNRs are recorded. The experiment is repeated with increasing degrees of higher-order blur, and thus increasing discrepancy from the lower-order approximations. The average PSNR and its standard deviation over 50 trials for ring deconvolution and DeepRD are plotted as a function of the number of higher-order coefficients used in the blur, from 0 (baseline case) to 8 (all 13 coefficients). While the average PSNRs for ring deconvolution seem to drop slightly from the baseline, all orders for both methods have PSNRs within a standard deviation of the baseline error, suggesting the presence of higher-order coefficients are not enough to drastically affect RDM. For all experiments, noise was Poisson with SNR 30 and the coefficients were always normalized to unit norm. }

\label{fig:higher-order}
\end{figure} 

Here, we give a brief background on Seidel coefficients. This is not a complete treatment and we encourage an interested reader to refer to Voelz et al.\cite{voelz2011computational} for an overview, and the standard Born et al.\cite{born2013principles} for a fully rigorous derivation.

Consider a rotationally-symmetric---and consequently LRI---imaging system. It is common to consolidate the system aberrations into a single function and apply it at the exit pupil plane of the system. This complex-valued function, known as the generalized pupil function $p$ is composed of a binary-valued amplitude distribution (set by the shape of the pupil) and a phase distribution $w$, which quantifies the deviation of the pupil wavefront from the ideal spherical shape necessary for diffraction-limited imaging.
Since LRI systems are generally spatially-varying (albeit only radially), $w$ becomes a function of the distance of the source from the optical axis $r$. 
Letting $(s, t)$ be the pupil plane coordinates, we can write $p$ as
\begin{equation}
	p(s, t; r)  = circ\bigg( \frac{r}{R} \bigg) e^{w(s, t; r)}.
\end{equation}
For rotationally-symmetric systems, as in our case, it is possible to expand $w$ as an infinite power series (see the aberration function in \cite{born2013principles}). Usually only the 4th order terms of this series are used, which yields the following 2D polynomial of the pupil plane coordinates,
\begin{equation}
	w_{\omega}(s, t; r) = \omega_{s}(s^2 + t^2)^2 +  \omega_{c}(s^2 + t^2)s r +  \omega_{a} s^2 r^2 +  \omega_{f} (s^2 + t^2) r^2 +  \omega_{d} s r^3,
\end{equation}
\noindent where $ \omega = ( \omega_{s},  \omega_{c},  \omega_{a},  \omega_{f},  \omega_{d} )$ are the five primary Seidel coefficients and $w_{ \omega}$ only depends on the radial location due to the LRI assumption.
Note that while the 5 primary Seidel coefficients are a subset of the infinitely many available coefficients, they represent the most common optical aberrations: spherical, coma, astigmatism, field curvature, and distortion. In particular, these aberrations are inherent to all spherically-shaped optics. Finally, the pupil function contains the same information as the PSFs, since they are related by a Fourier transform,
\begin{equation}
	\label{eq:pupil-psf}
h(x,y; r) =  \big | \F^{-1}\{  p(-\lambda d f_{x}, -\lambda d f_{y}; r)   \}  \big |^2 ,
\end{equation}
where $\lambda$ is the wavelength of light, $d$ is the distance from the pupil plane to the image plane, and $f_x, f_y$ are the variables which the Fourier transform is taken over. Thus, knowledge of the 5 Seidel coefficients provides an approximation of the pupil function, which in turn, can accurately estimate PSFs for LRI systems, including the radial line of PSFs needed for computing ring convolution.

\subsection{Higher-order coefficients}

\label{appendix:higher-order}
The Seidel series is an infinite power series expansion and thus has an infinite number of coefficients. While the aforementioned primary coefficients are dominant in practical optics---particularly for spherical lenses---it is worth considering the impact of higher-order coefficients, especially in the context of our Seidel fitting and ring convolution/deconvolution algorithms. To that end, we include a simulated experiment that tests our methods in the additional presence of  6th order Seidel aberration coefficients. In the experiment (shown in Fig.~\ref{fig:higher-order}), a randomly-generated set of 13 (5 primary and 8 6th order) coefficients with unit norm are used to generate PSFs, which are then used to create a calibration image and blur a test image. The calibration image is noised and used to fit only the 5 primary coefficients. Finally, the fitted primary coefficients are used to deblur the blurry test image via ring deconvolution. Despite a noticeable difference in the higher-order PSFs, the fitting procedure almost always was able to converge to the correct primary coefficients, albeit with a slightly inflated values---likely accounting for the effects of the higher-order terms. The average deblurring error for ring deconvolution
drops slightly compared to the baseline case (0 high-order coefficients), but is within a standard deviation of the baseline error. DeepRD's error also remains virtually unchanged and appears even more robust to the addition of high-order coefficients. Thus even in the presence of higher-order aberrations, RDM calibrated with only the primary Seidel coefficients produces accurate deblurring. As a final note, though the primary coefficients are largely sufficient, we do additionally include the option to fit the higher-order terms in our \href{https://github.com/apsk14/lri}{codebase}.

\section{Interpretability of DeepRD}
\label{appendix:DeepRD}
One desirable property of DeepRD is its interpretability.
Because the DeepRD latent space lives in the space of Seidel coefficients, we can easily see \emph{why} our image reconstruction looks the way it does.
Moreover, if our reconstruction is deficient, we can manually adjust the latent space, thereby changing the Seidel coefficients and improving the reconstruction.
In the literature on encoder-decoder frameworks, this property is called disentanglement, that is, each element in the latent space has its own distinct and interpretable meaning.
For example, if we only increase the coefficient value for radial distortion, then we expect the network to further unwarp the predicted image correspondingly.

This disentanglement property emerges in practice. We observe this by deblurring images of insect cornea taken by the Miniscope using a sweep over each aberration coefficient independently. For each coefficient, we set all other coefficients to 0 and sweep the chosen coefficient from 0 to 3 waves. What we observe is that the network only corrects for the chosen aberration in increasing amounts, while leaving the other characteristics of the image alone. For example, when we sweep distortion, the image remains blurry, but radially warps in the opposite direction compared to if it were distorted. Results for each aberration coefficient are shown in video form \href{https://berkeley.box.com/s/11ag924u17qxe0b7d1ghe8lmfngzfofa}{here}. Such results confirm our hypothesis that DeepRD is indeed learning to perform ring deconvolution given a system's Seidel coefficients, and does not rely on knowledge of the specific image distribution. This further aids our confidence in its ability to generalize.

\section{A Fourier analysis of LRI systems}
\label{appendix:fourier}

Perhaps the main reason to make the linear space-invariant assumption is access to its Fourier space interpretation. Through an application of the Convolution Theorem to the space-invariant forward model, we see that an LSI system's output spectrum is a product of the input's spectrum with the spectrum of the impulse response, i.e., the transfer function. Thus any LSI imaging system can be thought of as a \emph{filter} which individually scales each frequency component of the input object's intensity distribution. In the context of imaging, the transfer function---called the Optical Transfer Function (OTF)---describes how the imaging system scales each spatial frequency in the sample. The OTF gives rise to valuable intuitions; for example, since all practical imaging systems have bandlimited transfer functions, the output image is a lowpass-filtered version of the sample whose maximum-achievable resolution is directly proportional to the bandlimit. Moreover system aberrations can be analyzed by comparing the OTF in the presence of aberrations with the ideal OTF.

In this section we will formally develop an analogous Fourier interpretation of ring convolution and explain how its features can similarly be used to characterize the performance of an imaging system.  While the LRI interpretation is more complicated than its LSI counterpart, it still provides a rich and realistic view of how LRI imaging systems transmit frequencies. As an example of its utility, we will see that the LRI model allows for a more general, radially-dependent notion of system resolution. For a review of the ring convolution operation please refer to Fig.~\ref{fig:lri_fwd}

\begin{figure}[t]
 \centering
\includegraphics[width=\linewidth]{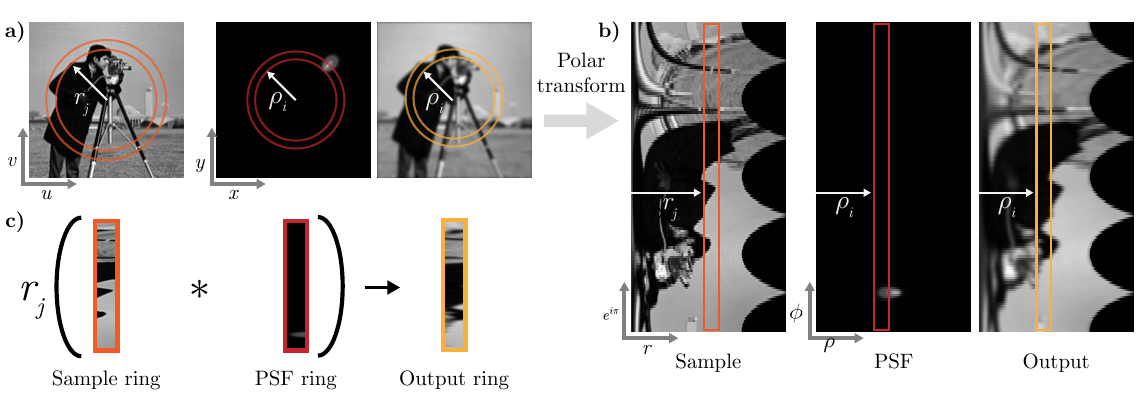}
 \caption{\textbf{Ring convolution.} A single step of the ring convolution algorithm. Here we are solving for a single concentric ring of the output denoted by its distance $\rho_i$ from the center. \textbf{a)} Left to right: sharp 'sample`, the system PSF at radius $r_j$, and blurry output of ring convolution.\textbf{b)} Corresponding polar resamplings of the images in a), this allows us to extract rings. \textbf{c)} The output ring at $\rho_i$ is the sum of 1D convolutions; the $\rho_i$ ring of the PSF at position $r_j$ is convolved with the $r_j$ ring of the sample for every radius $r_j$.
 }
 \label{fig:lri_fwd}
\end{figure} 

To begin, we will define a slightly different notion of spatial frequency, called rotational frequency and denoted $\Theta$, which is the quantity that gets filtered by an LRI system. 
\begin{definition}[Rotational Fourier Transform (RoFT)]
Let $f:\R^2 \to \C$ represent a (potentially complex-valued) image, and let $\tilde{f}$ be it's polar counterpart (as per~\ref{eq:lri}).
The RoFT of $f$ is given by
\begin{equation}
    \tilde{F}(r,\xi) = \int \tilde{f}(r, \theta) e^{i2\pi\theta \xi}d\theta.
\end{equation}
\end{definition}

\begin{figure}[t]
	\centering
	\includegraphics[width=\linewidth]{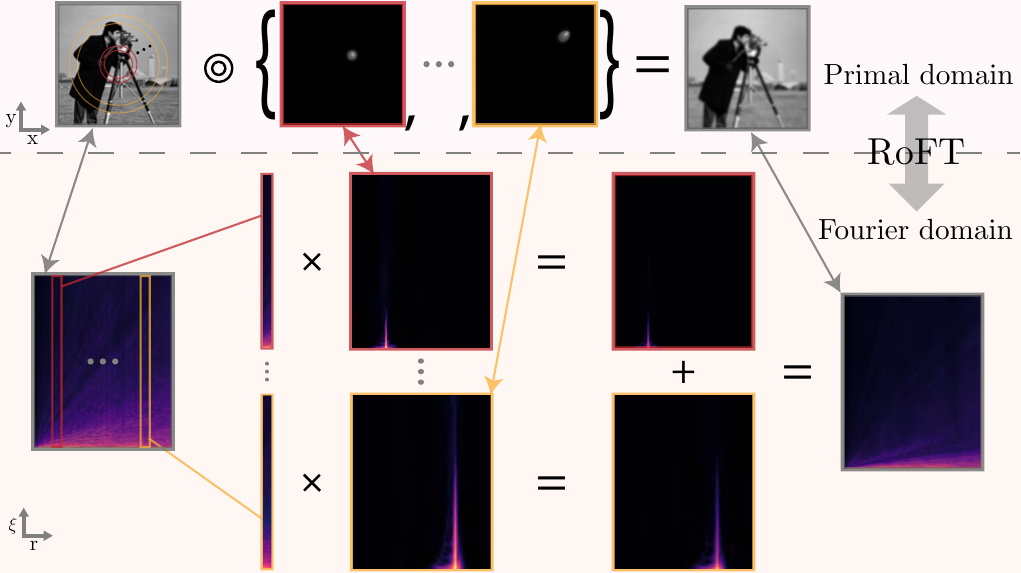}
	\caption{\textbf{LRI filtering.} The top, pink box shows LRI filtering in the spatial domain: an object (left) is LRI filtered by a radial set of PSFs (middle) to give a blurry image (right). The bottom, orange box shows the Fourier domain equivalent. Double arrows indicate Rotational Fourier Transform (RoFT) pairs. Strips at each $r$ of the object RoFT (left) are individually multiplied by the RoFTs of the corresponding PSF at $r$ (middle left) producing filtered contributions (middle right). Finally each of these contributions for every $r$ is summed to form the RoFT of the blurry image (right). }
	\label{fig:lri_fourier}
\end{figure} 
Intuitively, one can think of the values of rotational frequencies as quantifying how quickly the image can change as one travels in a ring of radius $r$ around the center. However, this notion of oscillation speed depends on the radius---think about the spokes on the wheel of a bicycle, even though the spokes are evenly spaced for any given radius, they become further spaced for larger radii. However, under polar sampling they are scaled to have the same frequency. This scaling can be seen in Fig.~\ref{fig:lri_fourier} as the RoFT of each quantity tend to have higher frequencies at larger radii.

Now, writing the Ring Convolution Theorem in terms of the RoFTs of each quantity (denoted with capital letter and tilde), we see that 
\begin{equation}
    \tilde{F}(\rho,\xi) = \int r \tilde{G}(r,\xi) \tilde{H}(\rho, \xi; r)dr.
\end{equation}
Like its LSI counterpart, the above model can be thought of a filtering operation, but now a more complex one; the object's values $r$ away from the center are \emph{filtered} and \emph{mixed} by the $r^{th}$ PSF. To more fully understand this interpretation, consider the object's values at some radius $r$; they form a ring which will be filtered by the $r^{th}$ PSF. Specifically, the spectrum of this ring, or equivalently its RoFT at $r$, is point-wise multiplied by each ring in the RoFT of the $r^{th}$ PSF, which yields a set of filtered rings indexed by $\rho$. The $\rho^{th}$ filtered ring represents the contribution of the object ring at $r$ to the image ring at $\rho$ (see Fig.~\ref{fig:lri_fourier}). 

\begin{figure}[t]
	\centering
	\includegraphics[width=0.75\linewidth]{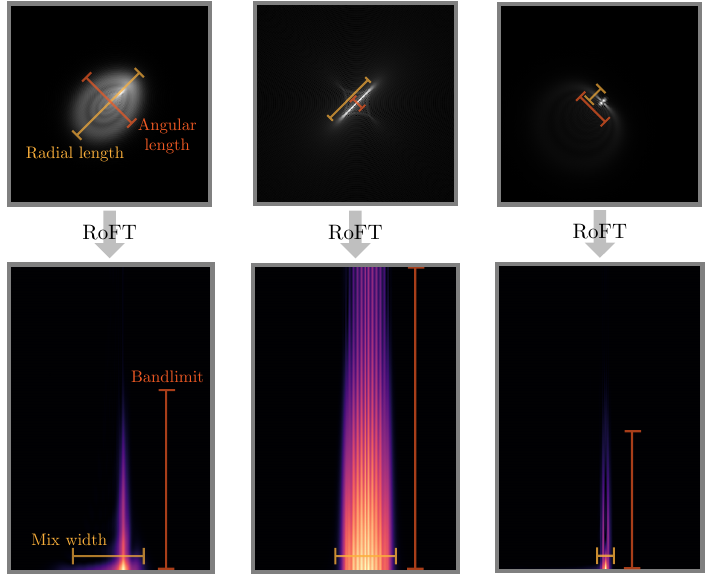}
	\caption{\textbf{LRI PSF interpretation.} Evaluating an LRI optical system amounts to considering the length of each PSF in the angular and radial directions. These lengths are inversely proportional to the bandwidth and mix width of the corresponding RoFT, respectively (see leftmost column). In the middle column we see an elogated PSF in the radial direction (i.e., astigmatism) which has a large bandwidth, but incurs more off-radius mixing due to a large mix width. The rightmost column shows an elongated PSF in the angular direction which has a small mix width, but has a relatively small bandwidth.}
	\label{fig:lri_psf_interp}
	
\end{figure} 

It follows that the shape of the $r^{th}$ PSF determines both how much the object's rotational frequencies are filtered through the system and which parts of the object are mixed together to form the image. By looking at the shape of the $r^{th}$ PSF we can determine the exact nature of this filtering and mixing. As shown in Fig.~\ref{fig:lri_psf_interp} the angular length of the PSF determines the height of the PSF's RoFT at $\rho$, which tells us which of the object rotational frequencies at $r$ will make it to the image at $\rho$---this is analogous to the OTF bandwidth in the LSI case. Meanwhile, the radial extent of the PSF (i.e., how many concentric rings it covers) controls the extent of the image rings (i.e, which $\rho$ values) are effected by the object at $r$. This manifests as the width of the PSF's RoFT which we call the \emph{mix width}. 

Now our notion of resolution, assuming the ability to perfectly de-mix, is radially dependent and, at radius $r$, is proportional to the angular arc length at $\rho=r$ of the $r^{th}$ PSF. The LRI filtering interpretation offers a more realistic understanding of how an imaging system filters an object and opens the door for a host of new imaging techniques which optimize for key features, such as resolution, under the LRI model. An analogous analysis for systems with other forms of symmetry--- like the linear symmetry present in light-sheet microscopy---will lead to similar conclusions.

\section{Additional Related Work}

The idea of linear space-invariance (LSI)---or rather its 1D predecessor time invariance---is difficult to trace historically, but the first mention of it that we were able to find was by Richard Hamming in 1934~\cite{hamming}, followed by the far more popular definition provided in~\cite{oppenheim1975digital}. 
The concept also has a close connection to the convolution integral---which had existed separately as far back as the 1800s~\cite{dominguez_2015}---through the famous Convolution Theorem.
Much of the foundational mathematics of convolution and its frequency domain interpretation comes from the work of Laplace and Fourier; these techniques enabled the application of linear time (space)-invariant systems to communications and, of course, imaging. 
The practical use of LSI systems for imaging analysis arose from Bell labs during the emergence of digital signal processing for communication in the 1960s and 1970s.

Image deblurring has its roots much later; it was only in 1949 when Norbert Wiener published the Wiener Filter for linear time-invariant deconvolution~\cite{wiener1964extrapolation}. While the original technique was developed in 1940 for time series data in order to track enemy planes in World War II, it was soon adapted for image deblurring. With the rise of digital computing in the following decades came an explosion of image deconvolution techniques which are commonly employed today such as Richardson-Lucy deconvolution~\cite{Richardson, Lucy}. It is interesting to note that some of these algorithms have existed for nearly a century and yet there is no clear ``best'' method.

\end{document}